\definecolor{webgreen}{rgb}{0,.5,0}
\definecolor{webbrown}{rgb}{.6,0,0}
\def\orr{ \mid }
\def\divides{{\, | \,}}
\def\Zee{\mathbb{Z}}
\DeclareMathOperator{\ord}{ord}
\DeclareMathOperator{\swm}{swm}
\DeclareMathOperator{\msw}{msw}
\DeclareMathOperator{\mfw}{mfw}
\def\modd#1 #2{#1\ \mbox{\rm (mod}\ #2\mbox{\rm )}}
\DeclareMathOperator{\argmin}{argmin}
\newcommand{\seqnum}[1]{\href{https://oeis.org/#1}{\underline{#1}}}
\title{Computational Aspects of Sturdy and Flimsy Numbers}
\author{Trevor Clokie, Thomas F. Lidbetter\footnote{Author's current address:  RideCo, Waterloo, ON, Canada.}, Antonio Molina Lovett\footnote{Author's current address:  Department of Computer Science, Princeton University, Princeton, NJ  08540, USA.},\\ Jeffrey Shallit, and Leon Witzman\\
School of Computer Science\\
University of Waterloo\\
Waterloo, ON  N2L 3G1 \\
Canada\\
\href{mailto:trevor.clokie@uwaterloo.ca}{\tt trevor.clokie@uwaterloo.ca} \\
\href{mailto:finnlidbetter@gmail.com}{\tt finnlidbetter@gmail.com} \\
\href{mailto:antonio@amolina.ca}{\tt antonio@amolina.ca} \\
\href{mailto:shallit@uwaterloo.ca}{\tt shallit@uwaterloo.ca} \\
\href{mailto:lwitzman@edu.uwaterloo.ca}{\tt lwitzman@edu.uwaterloo.ca}
}
\date{}
\begin{document}

\maketitle

\theoremstyle{plain}
\newtheorem{theorem}{Theorem}
\newtheorem{corollary}[theorem]{Corollary}
\newtheorem{lemma}[theorem]{Lemma}
\newtheorem{proposition}[theorem]{Proposition}

\theoremstyle{definition}
\newtheorem{definition}[theorem]{Definition}
\newtheorem{example}[theorem]{Example}
\newtheorem{conjecture}[theorem]{Conjecture}

\theoremstyle{remark}
\newtheorem{remark}[theorem]{Remark}

\begin{abstract}
Following Stolarsky, we say that a natural number $n$ is {\it flimsy\/} in base $b$ if some positive multiple of $n$ has smaller digit sum in base $b$ than $n$ does; otherwise it is {\it sturdy\/}.  We develop algorithmic methods for the study of sturdy and flimsy numbers.

We provide some criteria for determining whether a number is sturdy.  Focusing on the case of base $b = 2$, we study the computational problem of checking whether a given number is sturdy, giving several algorithms for the problem.    
We find two additional, previously unknown sturdy primes.  We develop a method for determining which numbers with a fixed number of $0$'s in binary are flimsy.  Finally, we develop a method that allows us to estimate the number of $k$-flimsy numbers with $n$ bits, and we provide explicit results for $k = 3$ and $k = 5$.   Our results demonstrate the utility (and fun) of creating algorithms for number theory problems, based on methods of automata theory.
\end{abstract}

\section{Introduction}

Let $s_b (n)$ denote the sum of the digits of $n$, when expressed in 
base $b$.  Thus, for example, $s_2(9) = 2$.   A number $n$ is
said to be {\it $k$-flimsy in base $b$} if there exists a positive
integer $k$ such that $s_b(kn) < s_b(n)$.  
Any such $k$, if one exists, is called
a {\it flimsy witness} for $n$.  If $n$ is $k$-flimsy for some $k$,
it is said to be {\it flimsy}. 
If there is no such
$k$, then $n$ is said to be {\it sturdy in base $b$}.  For example, $7$ is sturdy in base $2$, while $13$ is flimsy, because $s_2 (13) = 3 > 2 = s_2(5\cdot 13)$.  Thus $5$ is a flimsy witness for $13$.   In this paper we examine the computational aspects of sturdy and flimsy numbers.

Sturdy and flimsy numbers were introduced by Stolarsky in 1980 \cite{Stolarsky:1980}.
For other papers on the topic, see \cite{Schmidt:1983,Dartyge&Luca&Stanica:2009,Chen&Hwang&Zacharovas:2014,Basic:2017}.  

Many of the sequences we discuss appear in the {\it On-Line Encyclopedia of Integer Sequences} \cite{Sloane}.  For example, the base-$2$ sturdy numbers form sequence \seqnum{A125121} in the OEIS, while the base-$2$ sturdy primes form sequence \seqnum{A143027}.   The base-$2$ flimsy numbers form sequence
\seqnum{A005360}, while the base-$2$ flimsy primes form sequence
\seqnum{A330696}.
The base-$10$ sturdy numbers form sequence \seqnum{A181862}, while the base-$10$ sturdy primes form sequence
\seqnum{A181863}.   Sequence \seqnum{A086342} gives the value of
$\min_{k \geq 1} s_2(kn)$, while sequence \seqnum{A143069} gives
$\argmin_{k \geq 1} s_2(kn) = \min\{ k \, : \, s_2(kn) = \min_{k \geq 1} s_2 (kn) \}$.

The goal of this paper is to examine the algorithmic aspects of sturdy and flimsy numbers.
The outline of the paper is as follows.
In Section~\ref{basic}, we prove some basic properties of digit
sums of multiples.  In Section~\ref{infinite}, we give a criterion for determining if a number is flimsy, and use it to find two previously unknown sturdy primes.

Next, we turn to algorithms for sturdy and flimsy numbers.  A priori it is not immediately clear that it is even decidable whether a given $n$
is flimsy or sturdy.  Indeed, in a recent paper by 
Elsholtz \cite{Elsholtz:2016}, he asks, ``How can one algorithmically find a `sparse' representation of a multiple of $p$?''  

More precisely, there are four computational problems worthy of study:
\begin{enumerate}
    \item Given a positive integer $n$, decide whether it is sturdy in base $b$.
    \item Compute $\swm_b(n) := \min_{k \geq 1} s_b(kn)$.  This is the smallest digit sum
    of a multiple of $n$; if $n$ is sturdy, then
    $\swm_b(n) = s_b (n)$.
    \item Compute $\msw_b(n) := \argmin_{k \geq 1} s_b(kn)$.   This is the smallest $k$ such that
    $kn$ achieves its minimum digit sum;
    if $n$ is sturdy, then $\msw_b(n) = 1$.
    \item Given that $n$ is flimsy, determine
    $\mfw_b(n) := \min\{k \, : \, s_b(kn) < s_b (n) \}$.   This is the {\it minimal flimsy witness}.
\end{enumerate}
A table of these functions is
given in Appendix C.
In Sections~\ref{algo}--\ref{algo4},
we discuss algorithms to solve these problems.
The fastest, based on automata theory, shows that we can check whether a number $n$ is sturdy in $O(n)$ time.  Section~\ref{results} gives our computational results achieved with our algorithms.

In Section~\ref{moreapp} we give an application of automata to help characterize the flimsy numbers with a fixed number of $0$'s.

Finally, in Section~\ref{flim}, we turn to estimating the number of $k$-flimsy numbers with $n$ bits.  We use techniques from formal language theory to solve the problem.

\section{Basic properties}
\label{basic}

In this section, we prove some of the basic properties of digit sums of multiples.  

We start with some
notation.   For $n \geq 0$, we define $(n)_b$ to be the base-$b$ representation of $n$, starting with the most significant digit.   If $x$ is a string, we define $[x]_b$ to be the integer that $x$ represents when interpreted in base $b$.   If $b$ is fixed, we define $\overline{x}$ to be the base-$b$ complement of $x$, that is, the string where each digit $d$ in $x$ is replaced by $b-1-d$.

\begin{theorem}
Let $b\geq 2$ be an integer, and $t$ be a positive divisor of $b$. Then for all integers $n,r \geq 1$, there exists a positive integer $j$ such that $s_b(jn) = r$ if and only if there exists a positive integer $k$ such that $s_b(ktn) = r$.
\label{thm1}
\end{theorem}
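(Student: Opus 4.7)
The plan is to reduce both directions to the observation that, since $t$ divides $b$, the fraction $b/t$ is a positive integer, and that multiplication by $b$ in base $b$ merely appends a $0$ and therefore preserves the digit sum.

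For the easier direction, suppose some $k \geq 1$ satisfies $s_b(ktn) = r$. Then one can simply take $j = kt$: this is a positive integer, and $s_b(jn) = s_b(ktn) = r$. So the ``if'' direction is immediate and requires no real work.

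For the converse, suppose $j \geq 1$ satisfies $s_b(jn) = r$. The idea is to modify $j$ by multiplying it by $b$ in order to force divisibility by $t$ without disturbing the digit sum. First I would record the elementary fact that $s_b(bm) = s_b(m)$ for every positive integer $m$, since $(bm)_b = (m)_b \, 0$. Then I would set $k := jb/t$, which is a positive integer because $t \mid b$. Now $ktn = j \cdot b \cdot n$, so $s_b(ktn) = s_b(bjn) = s_b(jn) = r$, and $k$ is the required witness.

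There is no real obstacle here; the only point one has to be slightly careful about is that $k = jb/t$ is an \emph{integer}, which uses exactly the divisibility hypothesis $t \mid b$, and that $k \geq 1$, which is clear since $j, b, t \geq 1$.
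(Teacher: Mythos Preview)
Your proof is correct and follows essentially the same approach as the paper: for the forward direction you set $k = jb/t$ and use $s_b(bm) = s_b(m)$, exactly as the paper does, and for the reverse direction your choice $j = kt$ is the obvious one (and is in fact clearer than the paper's somewhat cryptic ``take $k = j$ and $t = 1$'').
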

\begin{proof}
For one direction, take $k = j$ and $t = 1$.

For the other direction,
assume that there exists $j \geq 1$ such that $s_b(jn) = r$. Let $k = bj/t$. Then $s_b(ktn) = s_b(bjn) = s_b(jn) = r$.
\end{proof}

We now show that in order
to compute $\swm_b$, it
suffices to consider only
those arguments relatively
prime to $b$.

\begin{corollary}
Write the prime factorization of $n$
as $\prod_{1 \leq i \leq t} p_i^{e_i}$, and define
$g = \prod_{p_i \mid b} p_i^{e_i}$.   Then
$\swm_b(n) = \swm(n/g)$,
and $\gcd(b, n/g) = 1$.
\label{cor2}
\end{corollary}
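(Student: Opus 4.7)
The plan is to dispatch the coprimality claim directly from the definition, and then reduce the equality $\swm_b(n) = \swm_b(n/g)$ to a repeated application of Theorem~\ref{thm1}.

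For the coprimality: by construction, $n/g = \prod_{p_i \nmid b} p_i^{e_i}$, so every prime factor of $n/g$ fails to divide $b$, giving $\gcd(b, n/g) = 1$.

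For the equality of $\swm_b$ values, I would first reformulate Theorem~\ref{thm1} in a way that is convenient for induction. The "if and only if" in that theorem says that for each $t \mid b$ and each $m \geq 1$, the sets $\{s_b(jm) : j \geq 1\}$ and $\{s_b(ktm) : k \geq 1\}$ coincide. In particular, they have the same minimum element, i.e., $\swm_b(m) = \swm_b(tm)$ whenever $t$ is a positive divisor of $b$. Next, I would express $g$ as a product of (not necessarily distinct) prime divisors of $b$: writing $g = q_1 q_2 \cdots q_s$ with $s = \sum_{p_i \mid b} e_i$ and each $q_j \in \{p_i : p_i \mid b\}$, every $q_j$ divides $b$ on its own.

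Now set $m_0 = n/g$ and inductively $m_j = q_j m_{j-1}$, so that $m_s = g \cdot (n/g) = n$. Applying the reformulated Theorem~\ref{thm1} at each step with $t = q_j \mid b$, we obtain
\[
\swm_b(n/g) = \swm_b(m_0) = \swm_b(m_1) = \cdots = \swm_b(m_s) = \swm_b(n),
\]
which is the desired identity. The only subtle point, and the one I would be careful to flag in the writeup, is extracting the correct consequence from Theorem~\ref{thm1}'s biconditional — namely that the \emph{sets} of achievable digit sums of multiples of $m$ and of $tm$ coincide, and hence so do their minima. Once that is noted, the proof is a short induction on the number of prime factors of $g$, with no real obstacle.
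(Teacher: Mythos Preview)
Your proof is correct and follows essentially the same approach as the paper: extract from Theorem~\ref{thm1} that $\swm_b(m)=\swm_b(tm)$ whenever $t\mid b$, and then peel off the prime factors of $g$ one at a time. The only cosmetic difference is that the paper strips primes from $n$ down to $n/g$, whereas you build up from $n/g$ to $n$; both are the same induction.
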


\begin{proof}
Let $p$ be any prime dividing both $b$ and $n$. From Theorem~\ref{thm1}, we see that $\swm_b (n) = \swm_b(n/p)$.  By repeatedly applying this observation, and replacing $n$ with $n/p$,
we can remove from $n$ all primes dividing both $b$ and $n$,
while maintaining the
same value of $\swm_b$.
At the end, the resulting $n/g$ is relatively prime to $b$. \end{proof}

\begin{theorem}
There exists $j \geq 1$ such that $s_b(jn) = t$ if and only if there exist $t$ \emph{distinct} powers of $b$ that sum to a multiple of $n$.
\label{thm3}
\end{theorem}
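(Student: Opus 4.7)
The backward direction is immediate: if $b^{a_1}, \ldots, b^{a_t}$ are $t$ distinct powers of $b$ with $\sum_{l=1}^{t} b^{a_l} = cn$ for some positive integer $c$, then the base-$b$ representation of $cn$ has a $1$ at each of the positions $a_1, \ldots, a_t$ and a $0$ elsewhere, so $s_b(cn) = t$ and we may take $j = c$.

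For the forward direction, write $jn = \sum_{i} d_i b^i$ with $0 \le d_i \le b-1$ and $\sum_i d_i = t$. The plan is to replace each ``stack'' of $d_i$ copies of $b^i$ by $d_i$ distinct powers of $b$, each individually congruent to $b^i$ modulo $n$, in such a way that the $t$ exponents chosen across all positions are pairwise distinct. The resulting sum will still be $\equiv 0 \pmod{n}$, but it will now be a sum of $t$ distinct powers of $b$.

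The key tool is the eventual periodicity of $b^i \bmod n$. Using the decomposition of Corollary~\ref{cor2}, write $n = g n'$ where every prime factor of $g$ divides $b$ and $\gcd(b, n') = 1$; let $p$ be the multiplicative order of $b$ modulo $n'$, and choose $e$ large enough that $g \mid b^e$. Then $b^{i+p} \equiv b^i \pmod{n}$ for every $i \ge e$. Replace $jn$ by $b^N \cdot jn$ for some fixed $N \ge e$: this is still a positive multiple of $n$ with base-$b$ digit sum $t$, and all of its nonzero digit positions now lie at heights $\ge e$, i.e., in the purely periodic regime. For each nonzero digit $d_i$ (now situated at position $i+N$), pick $d_i$ exponents from the arithmetic progression $\{\, i + N + kp : k \ge 0 \,\}$; each such exponent $a$ gives a power $b^a \equiv b^{i+N} \pmod{n}$. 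Exponents drawn for positions $i$ and $i'$ with $i \not\equiv i' \pmod{p}$ sit in disjoint residue classes mod $p$ and hence cannot clash; within a single residue class each progression is infinite, so a greedy choice (always take the next unused sufficiently large exponent in that class) produces $t$ pairwise distinct exponents whose powers sum to a multiple of $n$.

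The only real subtlety is the case $\gcd(b,n) > 1$, in which $b^i \bmod n$ is merely eventually periodic rather than purely periodic, so the ``add $p$ to an exponent'' move is valid only once $i \ge e$. Pre-multiplying by $b^N$ pushes all nonzero digits past that threshold and sidesteps the issue, so that the periodic shifting argument goes through uniformly.
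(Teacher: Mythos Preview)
Your proof is correct and follows essentially the same approach as the paper: exploit the (eventual) periodicity of $b^i \bmod n$ to replace each stack of $d_i$ copies of $b^i$ by $d_i$ distinct powers that are all congruent to $b^i$ modulo $n$. The paper first reduces to $\gcd(b,n)=1$ via Corollary~\ref{cor2} and then writes down the explicit exponents $j\nu+i$ (for $0\le i<\nu$, $0\le j<c_i$), whereas you handle the non-coprime case directly by pre-multiplying by $b^N$ to enter the purely periodic regime and then choose exponents greedily within each residue class---but the underlying mechanism is identical.
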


\begin{proof}
By Corollary~\ref{cor2}, we may
assume that $n$ is coprime with $b$.

In such cases, $b$ has finite order, say $\nu$, modulo $n$. Suppose $\sum_{i = 0}^{\nu - 1} c_ib^i \equiv \modd {0} {n}$ where each $c_i \geq 0$ and $\sum_{i = 0}^{\nu - 1} c_i = t$. Then $\sum_{i = 0}^{\nu - 1} \sum_{j = 0}^{c_i - 1}b^{j\nu + i} \equiv \modd {0} {n}$, and this sum consists of distinct powers of $b$.
\end{proof}

Empirical evidence suggests that
if $b = 2$ and
$\swm_b (n) = t$, then for all $i \geq 0$, some multiple of $n$ has digit sum $t+i$.   However, the analogous result
is false for $b = 3$.  For example,
$\swm_3 (13) = 3$, but no multiple of $13$ has digit sum $4$.  These observations are explained in the following theorem.

\begin{theorem}
Suppose $j, n$ are positive integers such that $s_b(jn) = t$. Then for all $r \geq 0$, there exists $k \geq 1$ such that $s_b(kn) = t + r(b - 1)$.
\end{theorem}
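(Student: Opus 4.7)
My plan is to use Theorem~\ref{thm3} as a bridge: if we can exhibit $t + r(b-1)$ distinct powers of $b$ summing to a positive multiple of $n$, then that multiple has all of its base-$b$ digits in $\{0,1\}$ and hence digit sum exactly $t+r(b-1)$. By Theorem~\ref{thm3}, the hypothesis $s_b(jn) = t$ already gives such a collection of size $t$. So it suffices to show that any collection of $s$ distinct powers of $b$ whose sum is a positive multiple of $n$ can be enlarged to a collection of $s + (b-1)$ distinct powers whose sum is still a positive multiple of $n$; iterating this enlargement $r$ times finishes the proof.

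By Corollary~\ref{cor2} I may assume $\gcd(n,b) = 1$, so $b$ has finite multiplicative order $\nu$ modulo $n$, i.e.\ $b^\nu \equiv 1 \pmod{n}$. The algebraic heart of the enlargement step is the congruence
\[
b^{a} \;\equiv\; b^{a-1} + b^{a-1+\nu} + b^{a-1+2\nu} + \cdots + b^{a-1+(b-1)\nu} \pmod{n},
\]
valid for any $a \geq 1$, since each of the $b$ summands on the right is $\equiv b^{a-1} \pmod{n}$ and $b \cdot b^{a-1} = b^{a}$. Thus, removing a single exponent $a$ from our collection and inserting the $b$ new exponents $a-1,\, a-1+\nu,\, \ldots,\, a-1+(b-1)\nu$ keeps the total sum $\equiv 0 \pmod{n}$ and enlarges the collection by exactly $b-1$ elements.

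The main obstacle is ensuring that the $b$ newly inserted exponents are distinct from one another and from the $s-1$ exponents that remain. Distinctness among the new exponents is automatic since $\nu \geq 1$. For disjointness from the rest, I will first apply the harmless shift $a_s \mapsto a_s + M\nu$, where $a_s$ denotes the largest exponent in the current collection and $M$ is chosen large: because $b^{M\nu} \equiv 1 \pmod n$, this substitution preserves both the sum modulo $n$ and the distinctness of the collection, and for $M$ large enough it forces $a_s + M\nu - 1 > a_{s-1}$. Applying the key identity with $a = a_s + M\nu$ then inserts $b$ new exponents that are all strictly greater than $a_{s-1}$, hence disjoint from the surviving exponents $a_1,\ldots,a_{s-1}$. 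This completes the inductive step and hence the proof.
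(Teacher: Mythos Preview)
Your proof is correct and follows essentially the same strategy as the paper's: reduce to $\gcd(n,b)=1$, invoke Theorem~\ref{thm3} to obtain $t$ distinct powers of $b$ summing to a multiple of $n$, and then replace the largest exponent using the identity $b^{a}\equiv\sum_{j} b^{\,a-1+j\nu}\pmod n$ to gain $b-1$ new distinct exponents. The paper avoids your preliminary shift by indexing the new exponents as $a-1+j\nu$ for $j=1,\ldots,b$ (rather than $j=0,\ldots,b-1$), so that all of them are $\geq a$ and hence automatically exceed the surviving exponents; note also that the coprimality reduction is more precisely justified by Theorem~\ref{thm1} than by Corollary~\ref{cor2}.
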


\begin{proof}
Assume $s_b(jn) = t$ for some $t \geq 1$.   Then from Theorem~\ref{thm3} we know that $\sum_{i=1}^t b^{m_i} \equiv \modd{0} {n}$ for some strictly increasing $m_i$ and (replacing $j$ by $bj$ if needed) we can assume
$m_t \geq 1$.  Now replace the high-order bit $b^{m_t}$ in this sum with
the sum of $b$ terms $b^{\nu + m_t - 1} +
b^{2\nu + m_t - 1} + \cdots + 
b^{(b-1) \nu + m_t - 1}$,
where $\nu$ is the order of $b$, modulo $n$.  This has the effect removing $1$ bit, while adding $b$ additional bits, and each of the $b$ new terms is congruent to $b^{m_t - 1}$ (mod $n$).  So we have found another multiple of $n$ with digit sum $t +b-1$.   We can repeat this transformation
any number of times.
\end{proof}

\section{Infinite classes of sturdy numbers}
\label{infinite}

We first give a criterion for deciding whether a number is flimsy. This shows that Problem 1 on our list, determining whether a given positive integer is sturdy, is decidable.

\begin{theorem}
Let $n,b,j$ be positive integers, $b \geq 2$ such that $n$ divides $b^j - 1$. Then $n$ is flimsy in base $b$ if and only if $s_b(kn) < s_b(n)$ for some $k$ satisfying $1 \leq k \leq \frac{b^j-1}{n}$.
\label{thm5}
\end{theorem}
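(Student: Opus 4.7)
The $(\Leftarrow)$ direction is immediate from the definition of flimsy, so I would spend all the work on $(\Rightarrow)$. Assume $n$ is flimsy, fix any witness $m \geq 1$ with $s_b(mn) < s_b(n)$, and set $N_0 = mn$. The goal is to replace $N_0$ by a smaller positive multiple of $n$ whose digit sum is still below $s_b(n)$ and which is at most $b^j - 1$.

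The main tool is a \emph{folding} operation defined using the hypothesis $b^j \equiv 1 \pmod{n}$. Given a positive integer $N$, write its base-$b$ expansion in blocks of $j$ digits from the right: $N = \sum_{i \geq 0} B_i\, b^{ij}$ with $0 \leq B_i \leq b^j - 1$, and define $f(N) = \sum_{i \geq 0} B_i$. Two properties are crucial: (i) $f(N) \equiv N \pmod{n}$, since $b^{ij} \equiv 1 \pmod{n}$ for all $i$; and (ii) $s_b(f(N)) \leq \sum_i s_b(B_i) = s_b(N)$, by the standard subadditivity of digit sum under addition (carries only reduce the digit sum). Moreover, if $N > 0$ then $f(N) > 0$, because at least one block $B_i$ is nonzero.

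Now iterate: set $N_{t+1} = f(N_t)$. The key quantitative step is to verify that $N_{t+1} < N_t$ whenever $N_t \geq b^j$. Writing $N_t = \sum_{i=0}^{\ell} B_i b^{ij}$ with $B_\ell \geq 1$ and $\ell \geq 1$ (this is exactly the condition $N_t \geq b^j$), a direct subtraction gives
\[
N_t - N_{t+1} \;=\; \sum_{i=1}^{\ell} B_i\, (b^{ij} - 1) \;\geq\; b^j - 1 \;>\; 0.
\]
Thus the sequence is strictly decreasing as long as $N_t \geq b^j$, so there is a first index $T$ with $N_T < b^j$. This $N_T$ is a positive integer, divisible by $n$ (by property (i) applied $T$ times starting from $N_0 = mn$), with $s_b(N_T) \leq s_b(mn) < s_b(n)$.

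Setting $k = N_T/n$, we have $k \geq 1$ (because $N_T > 0$ and $n \mid N_T$), $k \leq (b^j - 1)/n$ (because $N_T \leq b^j - 1$), and $s_b(kn) = s_b(N_T) < s_b(n)$, as required. The only step I would call nontrivial is the bound $N_t - N_{t+1} \geq b^j - 1$; this is really just bookkeeping, but it is what forces the reduction to halt inside $[1, b^j - 1]$ rather than merely producing a multiple of $n$ with small digit sum.
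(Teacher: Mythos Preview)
Your proof is correct and rests on the same idea as the paper's: exploit $b^j \equiv 1 \pmod n$ to fold a large multiple of $n$ down below $b^j$ without increasing its digit sum. The paper packages this as a minimal-counterexample argument---take the least $k'$ with $s_b(k'n) < s_b(n)$, split $k'n = c\,b^j + d$ once, and observe that $c+d$ is a strictly smaller positive multiple of $n$ with $s_b(c+d) \leq s_b(c)+s_b(d) = s_b(k'n)$, contradicting minimality of $k'$---whereas you fold all $j$-digit blocks simultaneously and iterate until the result lands in $[1,b^j-1]$; the underlying mechanism is identical.
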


\begin{proof}
One direction is easy:  if $s_b(kn) < s_b(n)$ for some $k$, then $n$ is flimsy in base $b$. 

For the other direction, suppose $n$ is
flimsy, but
$s_b(kn) \geq s_b(n)$ for all $k$
with $1\leq k \leq \frac{b^j-1}{n}$. Let $k'$ be the smallest positive integer such that $s_b(k'n) < s_b(n)$. By assumption $k'n \geq b^j$, and so we can write $k'n = cb^j + d$ for uniquely-determined $c \geq 1$ and $0 \leq d < b^j$. Since $b^j \equiv \modd{1} {n}$, it follows that $cb^j + d \equiv c + d \equiv \modd{0} {n}$. Then $c + d = fn < cb^j + d = k'n$ for some integer $f$ with $1 \leq f < k'$. Thus $s_b(k'n) = s_b(cb^j + d) = s_b(c) + s_b(d) \geq s_b(c + d) = s_b(fn) \geq s_b(n)$, achieving the desired contradiction.
\end{proof}

\begin{remark}
Since $j \leq \varphi(n)$, this together with Theorem~\ref{thm1} shows that sturdiness is reduced to a finite search.  The result for $b = 10$ was observed by Phedotov \cite{Phedotov:2002}.

We applied Theorem~\ref{thm5} to known prime factors of composite Mersenne numbers
\cite{Wagstaff} and found  $$57912614113275649087721 =   \frac{2^{83} - 1}{167}$$
and 
$$10350794431055162386718619237468234569 = \frac{2^{131} - 1}{263}$$
as previously unknown sturdy primes in base $2$.
\end{remark}

\begin{corollary} If $b,j$ are positive integers, with $b \geq 2$, then
$\frac{b^j - 1}{m}$ is sturdy in base $b$ for every positive $m$ dividing $b - 1$.
\label{cor6}
\end{corollary}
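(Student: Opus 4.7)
The plan is to apply Theorem~\ref{thm5} with the given $j$: since $n = (b^j-1)/m$ is itself a divisor of $b^j - 1$, sturdiness reduces to checking $s_b(kn) \geq s_b(n)$ for all $k$ with $1 \leq k \leq (b^j-1)/n = m$. So the whole problem collapses to understanding the base-$b$ expansion of $kn$ for these $m$ values of $k$.

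The key observation I would use is that writing $b - 1 = m\ell$ (possible since $m \mid b-1$), we get
\[
n \;=\; \frac{b^j - 1}{m} \;=\; \ell \cdot \frac{b^j - 1}{b - 1} \;=\; \ell\sum_{i=0}^{j-1} b^i.
\]
Thus $n$ has the base-$b$ representation consisting of the digit $\ell$ repeated $j$ times, giving $s_b(n) = j\ell$. More generally, for $1 \leq k \leq m$, we have $kn = (k\ell)\sum_{i=0}^{j-1} b^i$, and since $1 \leq k\ell \leq m\ell = b-1$, the quantity $k\ell$ is a single base-$b$ digit. Hence $kn$ is represented in base $b$ as the digit $k\ell$ repeated $j$ times, and $s_b(kn) = jk\ell \geq j\ell = s_b(n)$.

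With this inequality for every $k$ in the allowed range, Theorem~\ref{thm5} immediately gives that $n$ is sturdy. I do not foresee a hard step: the only place one has to be a little careful is verifying that $k\ell$ remains a valid single digit (i.e.\ at most $b-1$), which uses exactly the hypothesis $m \mid b-1$ together with $k \leq m$. The edge case $b = 2$ forces $m = \ell = 1$ and the statement specializes to the well-known fact that $2^j - 1$ is sturdy in base $2$.
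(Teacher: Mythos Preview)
Your proof is correct and follows essentially the same approach as the paper: both apply Theorem~\ref{thm5} with the given $j$, write $n=\frac{b-1}{m}\sum_{i=0}^{j-1}b^i$ so that $kn$ has the single digit $k(b-1)/m$ repeated $j$ times for $1\le k\le m$, and conclude $s_b(kn)=jk(b-1)/m\ge j(b-1)/m=s_b(n)$. The only cosmetic difference is your introduction of the name $\ell=(b-1)/m$.
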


\begin{proof}
Let $k$ be an integer with $1 \leq k \leq m$.  Then we have
$$s_b \left(k\frac{b^j-1}{m} \right) = s_b\left( \frac{k(b-1)} {m} \sum_{i=0}^{j-1} b^i \right) = kj\frac{b-1}{m} \geq j\frac{b-1}{m} ,$$
where we have used the fact that $k \leq m$.
The result now follows by Theorem~\ref{thm5}.
\end{proof}

We can also get a generalization
of a theorem of Stolarsky
\cite[Thm.~2.1]{Stolarsky:1980}.

\begin{corollary}
Let $b\geq 2$, and let $r, e$ be integers.   Define
$n = {{b^{re}- 1} \over {b^e - 1}}$.
Then $n$ is sturdy in base $b$.
\end{corollary}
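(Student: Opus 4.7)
The plan is to apply Theorem~\ref{thm5} directly, in complete analogy with the proof of Corollary~\ref{cor6}. First, I would assume $r, e \geq 1$ (otherwise $n$ is $0$, $1$, or undefined), and take $j = re$ in Theorem~\ref{thm5}. Since $b^{re} - 1 = (b^e - 1) n$, we have $n \mid b^{re} - 1$, so Theorem~\ref{thm5} tells us that $n$ is sturdy provided $s_b(kn) \geq s_b(n)$ for every $k$ in the range $1 \leq k \leq \frac{b^{re} - 1}{n} = b^e - 1$.

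Next, I would rewrite $n$ in the geometric-series form
\[
n \;=\; \frac{b^{re} - 1}{b^e - 1} \;=\; \sum_{i=0}^{r-1} b^{ie},
\]
which shows that the base-$b$ expansion of $n$ is a $1$ in each of the positions $0, e, 2e, \ldots, (r-1)e$ and $0$ elsewhere; in particular $s_b(n) = r$.

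The heart of the argument is then the computation of $s_b(kn)$ for $1 \leq k \leq b^e - 1$. Writing $kn = \sum_{i=0}^{r-1} k \cdot b^{ie}$, I would observe that $k < b^e$ implies that the base-$b$ representation of $k$ occupies only positions $0, 1, \ldots, e-1$, so the shifted copy $k \cdot b^{ie}$ occupies only positions $ie, ie+1, \ldots, ie + e - 1$. These $r$ blocks of positions are pairwise disjoint, so the sum involves no carries, and therefore
\[
s_b(kn) \;=\; \sum_{i=0}^{r-1} s_b(k) \;=\; r \cdot s_b(k) \;\geq\; r \;=\; s_b(n),
\]
where the inequality uses $s_b(k) \geq 1$ since $k \geq 1$. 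Applying Theorem~\ref{thm5} completes the proof.

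The only substantive step is the no-carry observation, which is essentially a one-line remark once one writes $n$ as a geometric sum in powers of $b^e$; everything else is bookkeeping. I do not anticipate any real obstacle beyond making the digit-window bookkeeping precise.
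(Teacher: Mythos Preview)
Your proof is correct and follows essentially the same approach as the paper: write $n$ as the geometric sum $\sum_{i=0}^{r-1} b^{ie}$, observe that for $1 \leq k \leq b^e - 1$ the product $kn$ has base-$b$ expansion consisting of $r$ non-overlapping copies of $(k)_b$ so that $s_b(kn) = r\,s_b(k) \geq r = s_b(n)$, and conclude via Theorem~\ref{thm5}. If anything, your bookkeeping on the range of $k$ and the disjointness of the digit blocks is slightly more explicit than the paper's version.
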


\begin{proof}
Observe that $(n)_b = (1\, 0^{e-1})^{r-1}\, 1$, and so
$s_b(n) = r$.  On the other hand,
if $1 \leq k \leq b^{e-1}$, then
$(kn)_b$ consists of $r$ copies of
$(k)_b$, concatenated, separated by
some number of $0$'s.   So
$s_b(kn) = r s_b (k) \geq r$.
The result now follows by Theorem~\ref{thm5}.
\end{proof}

The next theorem gives an infinite class of sturdy numbers.

\begin{theorem}Fix $b \geq 2$. Let $n$ be a positive integer, and $x$ be the base-$b$ representation of $n$. Then every integer with base-$b$ representation of the form $x \, (b-1)^i \, \overline{x}$, where $i \geq 0$ and $\overline{x}$ is the base-$b$ complement of $x$, is sturdy in base $b$. 
\label{thm9}
\end{theorem}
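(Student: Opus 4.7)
The plan is to compute in closed form the integer $N$ whose base-$b$ representation is $x\,(b-1)^i\,\overline{x}$, and then deduce sturdiness by appealing to Corollary~\ref{cor6}, which already tells us that $b^m-1$ is sturdy.

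Set $\ell = |x|$ and $m = \ell + i$. Because corresponding digits of $x$ and $\overline{x}$ sum to $b-1$, we have $n + [\overline{x}]_b = b^\ell - 1$, i.e., $[\overline{x}]_b = b^\ell - 1 - n$. Combining this with $(b-1)\sum_{j=\ell}^{m-1} b^j = b^m - b^\ell$, I would compute
\[
N \;=\; n\,b^m \;+\; (b^m - b^\ell) \;+\; (b^\ell - 1 - n) \;=\; (n+1)(b^m - 1).
\]
Reading the digit sum directly off the prescribed representation gives $s_b(N) = s_b(x) + i(b-1) + s_b(\overline{x}) = \ell(b-1) + i(b-1) = m(b-1)$, where I use the fact that $s_b(x) + s_b(\overline{x}) = \ell(b-1)$ since each pair of corresponding digits sums to $b-1$.

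With the factorization $N = (n+1)(b^m-1)$ in hand, the rest is free. Every positive multiple of $N$ is in particular a multiple of $b^m - 1$, and Corollary~\ref{cor6}, applied with the divisor of $b-1$ in its statement taken to be $1$, asserts that $b^m - 1$ is sturdy in base $b$ with $s_b(b^m-1) = m(b-1)$. Consequently every positive multiple of $N$ has base-$b$ digit sum at least $m(b-1) = s_b(N)$, which is exactly the statement that $N$ is sturdy. The only substantive step in this argument is the closed-form identity for $N$; once that identity is verified, there is no remaining obstacle.
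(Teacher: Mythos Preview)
Your proof is correct and follows essentially the same route as the paper: both arguments compute the closed form $N=(n+1)(b^{m}-1)$ with $m=|x|+i$, observe that $s_b(N)=m(b-1)=s_b(b^m-1)$, and then invoke Corollary~\ref{cor6} to conclude that every multiple of $N$ (being a multiple of $b^m-1$) has digit sum at least $m(b-1)$. Your derivation of the closed form is slightly more explicit than the paper's, but the argument is the same.
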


\begin{proof}
Suppose $y = x \, (b-1)^i \, \overline{x}$ for some $i \geq 0$. Then $[y]_b + n = nb^{\lvert x \rvert + i} + b^{\lvert x \rvert + i} - 1$. Then $[y]_b = (n+1)(b^{\lvert x \rvert + i} - 1)$. Observe that $s_b(b^{\lvert x \rvert + i} - 1) = (\lvert x \rvert + i)(b-1) = s_b([y]_b)$. Then for every positive integer $k$ we have $s_b(k[y]_b) = s_b(k(n+1)(b^{\lvert x \rvert + i} - 1)) \geq s_b(b^{\lvert x \rvert + i} - 1) = s_b([y]_b)$.  By Corollary~\ref{cor6} it follows that By Corollary~\ref{cor6} it follows that $b^{|x|+i}-1$ is sturdy in base $b$, and hence $y$ is.
\end{proof}

\begin{corollary}
Let $b \geq 2$ be an integer, and $m$ be an integer such that $m^2$ divides $b - 1$. Then $\frac{(b^n - 1)^2}{m^2}$ is sturdy in base $b$ for all $n \geq 1$.
\end{corollary}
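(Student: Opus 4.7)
The plan is to realize $Q := \frac{(b^n-1)^2}{m^2}$ as a positive-integer multiple of $b^n-1$ whose base-$b$ digit sum equals $s_b(b^n-1) = n(b-1)$. Once this is in hand, sturdiness follows by invoking Corollary~\ref{cor6} (with $m=1$) on the factor $b^n-1$.

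The factorization is immediate. Set $M = (b^n-1)/m^2$, which is a positive integer because $m^2 \mid b-1 \mid b^n-1$, and write $Q = M(b^n-1)$. To compute $s_b(Q)$, I would rewrite $Q = (M-1)b^n + (b^n - M)$. Since $M \leq b^n-1$, the integer $M-1$ has a base-$b$ representation of length at most $n$; let $\mu$ denote this representation padded with leading zeros to length exactly $n$. Because $b^n - M = (b^n - 1) - (M - 1)$, the low-order $n$ digits of $Q$ form the digit-wise complement $\overline{\mu}$. Thus, viewed as a $2n$-digit string (possibly with leading zeros), $Q$ is represented by $\mu\,\overline{\mu}$, and its digit sum is $n(b-1)$ by pairing each digit with its complement.

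With the digit-sum identity established, the proof concludes in one line: for every $k \geq 1$, the product $kQ = (kM)(b^n-1)$ is a positive multiple of $b^n-1$, so by Corollary~\ref{cor6} one has $s_b(kQ) \geq s_b(b^n-1) = n(b-1) = s_b(Q)$, and hence $Q$ is sturdy.

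The only step that requires real work is the digit-sum identity, and the decomposition $M(b^n-1) = (M-1)b^n + (b^n-M)$ makes it essentially painless since it exposes the complementary pairing directly. Without this trick, the main obstacle would be handling the carries that arise in the product $M(b^n-1)$ for arbitrary admissible $m$ and $n$; the rewriting bypasses this entirely and works uniformly, including in the degenerate case $M=1$ (which forces $n=1$ and $m^2=b-1$, giving $Q=b-1$ with $\mu = 0^n$).
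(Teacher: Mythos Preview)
Your proof is correct and follows essentially the same route as the paper: the key algebraic decomposition $Q=(M-1)b^n+(b^n-M)$ is exactly the paper's computation (with $M=\frac{b^n-1}{m^2}$), yielding the base-$b$ representation $\mu\,\overline{\mu}$. The only difference is cosmetic: the paper concludes by invoking Theorem~\ref{thm9} (numbers of the form $x\,(b-1)^i\,\overline{x}$ are sturdy), whereas you inline that theorem's proof and appeal to Corollary~\ref{cor6} directly; since Theorem~\ref{thm9} itself is proved via Corollary~\ref{cor6}, the two arguments are really the same.
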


\begin{proof}
Suppose $m^2$ divides $b-1$. Then we have
\begin{align*}
\frac{(b^n - 1)^2}{m^2} &= \frac{b^n - 1}{m^2}b^n - \frac{(b^n - 1)}{m^2}\\
&= \frac{b^n - 1}{m^2}b^n - b^n + b^n - \frac{(b^n - 1)}{m^2}\\
&= \left(\frac{b^n - 1}{m^2} - 1\right)b^n + b^n - \frac{(b^n - 1)}{m^2}\\
&= \left(\frac{b^n - 1}{m^2} - 1\right)b^n + (b^n - 1) - \left(\frac{(b^n - 1)}{m^2} - 1\right),
\end{align*}
which has base-$b$ representation $x\overline{x}$ where $[x]_b = \frac{b^n - 1}{m^2} - 1$. Then by Theorem~\ref{thm9}, the number
$\frac{(b^n - 1)^2}{m^2}$ is sturdy.
\end{proof}

In the rest of this paper
we are almost exclusively concerned with the case $b = 2$, and so from now on we omit the subscripts on the functions $\msw, \swm, \mfw$,
and use the terms
{\it flimsy\/} or {\it sturdy\/} without further elaboration. In this case $s_2(n)$ equals the number of $1$'s in the binary representation of $n$, also known as the {\it Hamming weight} of $n$.

\begin{theorem} Let $r,e$ be positive integers.  Every integer with base-$2$ representation $(1^e0^e)^r1^e$ is sturdy in base $2$.
\end{theorem}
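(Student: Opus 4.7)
My plan is to apply Theorem~\ref{thm5} after identifying a convenient value of $j$. Setting $N = [(1^e 0^e)^r 1^e]_2$, I would first rewrite
\[ N = (2^e - 1) \sum_{i=0}^{r} 2^{2ie} = (2^e - 1) \cdot \frac{2^{2(r+1)e} - 1}{2^{2e} - 1}, \]
and then observe that multiplying by $2^e + 1$ telescopes the right-hand side to give $(2^e + 1)\, N = 2^{2(r+1)e} - 1$. Thus $N$ divides $2^j - 1$ for $j = 2(r+1)e$, with quotient exactly $2^e + 1$, and Theorem~\ref{thm5} reduces the problem to verifying $s_2(kN) \ge s_2(N) = (r+1)e$ for $1 \le k \le 2^e + 1$.

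For each such $k$ I would set $A = k(2^e - 1)$ and note that $A \le (2^e+1)(2^e-1) = 2^{2e} - 1 < 2^{2e}$, so the binary representation of $A$ fits in a block of at most $2e$ bits. Rewriting
\[ kN = A \sum_{i=0}^{r} 2^{2ie} = \sum_{i=0}^{r} A \cdot 2^{2ie}, \]
the $i$-th summand occupies bit positions in $[2ie,\, 2(i+1)e - 1]$, so the $r+1$ shifted copies of $A$ have pairwise disjoint $1$-supports and $s_2(kN) = (r+1)\, s_2(A)$.

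To finish I would invoke Corollary~\ref{cor6} with $m = 1$, which says $2^e - 1$ is sturdy, so $s_2(A) = s_2(k(2^e - 1)) \ge s_2(2^e-1) = e$. Combining, $s_2(kN) \ge (r+1)e = s_2(N)$, which is exactly what is needed.

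The main obstacle I anticipate is spotting the factorization $(2^e+1)\, N = 2^{2(r+1)e} - 1$; this is what lets Theorem~\ref{thm5} shrink the verification to the range $k \le 2^e + 1$, which is precisely the range where the bound $A < 2^{2e}$ forces the shifted copies of $A$ making up $kN$ to occupy disjoint $2e$-bit blocks, so that sturdiness of $2^e - 1$ alone suffices to carry the argument.
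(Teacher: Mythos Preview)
Your proof is correct, and it takes a genuinely different route from the paper's.  The paper argues by direct induction on $k$: after asserting that $s_2(kn)\ge s_2(n)$ is ``clearly true'' for $k\le 2^e+1$, it treats $k\ge 2^e+2$ by writing $k=2^e s+t$ and carrying out a fairly intricate case analysis on the parity of $s$, using identities for $s_2$ to reduce to a strictly smaller multiplier.  You instead observe that $(2^e+1)N=2^{2(r+1)e}-1$, so Theorem~\ref{thm5} already confines the verification to $1\le k\le 2^e+1$ --- precisely the paper's base case --- and you dispatch that range cleanly via the block decomposition $kN=\sum_{i=0}^r A\cdot 2^{2ie}$ with $A=k(2^e-1)<2^{2e}$, together with the sturdiness of $2^e-1$ from Corollary~\ref{cor6}.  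In effect you show that the paper's inductive step is unnecessary once Theorem~\ref{thm5} is invoked; what the paper buys with the induction is a self-contained argument that never appeals to Theorem~\ref{thm5}, at the cost of considerably more bookkeeping.  Your approach is also closer in spirit to the paper's own proof of the preceding corollary (for $n=(b^{re}-1)/(b^e-1)$), which uses exactly the same ``disjoint blocks plus Theorem~\ref{thm5}'' template.
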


\begin{proof} Let $n = [(1^e0^e)^r1^e]_2 = \sum_{i=0}^{r-1} (2^e-1)2^{2ie} = \frac{2^{2re} - 1}{2^e + 1} = -1 + 2^e - 2^{2e} + \cdots + 2^{(2r-1)e}$. We must show $s_2(kn) \geq s_2(n)$ for all positive integers $k$. We proceed by induction on $k$. The statement is clearly true for all $k \leq 2^e + 1$. Consider $k \geq 2^e + 2$. Then there exist uniquely-defined integers $s,t$ such that $k = 2^es + t$ where $s \geq 1$ and $1 \leq t < 2^e$. When $t$ is even, we have $s_2(kn) = s_2(\frac{k}{2}n) \geq s_2(n)$ by the inductive hypothesis. Now assume $t = 2t' + 1$ for some $t' \geq 0$.

\medskip

\noindent Case 1:  $s$ odd.
Write $s = 2s' + 1$ for some $s' \geq 0$. Using the properties that $s_2(2a + 1) = s_2(a) + 1$ and $s_2(a + 2^b - 2^b) \geq s_2(a + 2^b) - 1$ for all integers $a,b \geq 0$ \cite{Stolarsky:1980}, we have
\begin{align*}
s_2(kn) &= s_2((2^es + t)n)\\
&= s_2((s+t)n + (2^e + 1)sn - 2sn)\\
&= s_2((s+t)n + 2s'(2^{er}-1) + (2 + 2^2 + \cdots + 2^{er - 1} + 2^{er}) - 2sn - 2^{er}) + 1\\
&\geq s_2((s+t)n + 2s'(2^{er}-1) + 2(2^{er} - 1) - 2sn)\\
&= s_2((s'+t'+1)n + (s'+1)(2^e+1)n - sn)\\
&= s_2((t' + (s'+1)2^e + 2)n) \geq s_2(n)
\end{align*}
by the inductive hypothesis, since $t' + (s'+1)2^e + 2 < 2^es + t$ when $k \geq 2^e + 2$. 

\medskip

\noindent Case 2:  $s$ even.  Write $s = 2s'$ for some $s' \geq 1$. We have
\begin{align*}
s_2(kn) &= s_2((2^es + t)n)\\
&= s_2(2^esn + 2t'n + 2n - n)\\
&= s_2(2^esn + 2t'n + 2n + (1 - 2^e + 2^{2e} - \cdots - 2^{(2r-1)e}))\\
&= s_2(2^esn + 2t'n + 2n + 2^e(-1 + 2^e - 2^{2e} + \cdots + 2^{(2r - 1)e}) - 2^{2re}) + 1\\
&\geq s_2(2^esn + 2t'n + 2n + 2^en)\\
&= s_2((2^es' + 2^{e-1} + t' + 1)n) \geq s_2(n)
\end{align*}
by the inductive hypothesis, since $2^es' + 2^{e-1} + t' + 1 < 2^es + t$.  This completes the proof.
\end{proof}

\section{Algorithms when $\swm(n)$ is small}
\label{algo}

As we will see in Section~\ref{algo2}, for general $n$ we can determine whether $n$ is sturdy in $O(n)$ time.   We call this a linear-time algorithm.\footnote{Strictly speaking, the usage ``linear-time'' in the context of algorithms on integers would usually mean an algorithm that runs in $O(\log n)$ time.  But since no algorithm is this efficient, we stray from the common usage for brevity.}   Therefore, it is of interest to see when this can be improved.

If $\swm(n)$ is small, this fact can be verified  efficiently in some cases.  This is particularly relevant in the case where $n$ is prime because, according to a recent result of Elsholtz \cite{Elsholtz:2016}, almost all primes $p$ have $\swm(p) \leq 7$.   Furthermore, we know from results of Hasse \cite{Hasse:1966} and 
Odoni \cite{Odoni:1981} that a positive proportion of all primes satisfy $\swm(p) = 2$;
asymptotically, this fraction is $17/24$. For general $n$, however, the situation is different:  the set of $n$ for which
$\swm(n) = 2$ has density $0$; see the results of Moree in
\cite[Appendix B]{Pless&Sole&Qian:1997}.

\subsection{The case $\swm(n) = 2$}
\label{babystep}

If $\swm(n) = 2$, then 
$n\cdot\msw(n) = 2^k + 1$ for some integer $k \geq 1$.  Hence $n \divides 2^k + 1$, and so $-1$ 
belongs to the subgroup generated by $2$ (mod $n$). We can decide if $-1$ belongs to the 
subgroup generated by $2$ (mod $n$) by using an algorithm for the discrete logarithm problem. 
For example, the baby-step giant-step algorithm can be used to find $k$ such that 
$2^k \equiv -1 \pmod{n}$, if such a $k$ exists, with time complexity $O(\sqrt{n}\log n)$ 
\cite{Shanks:1969}.   If the factorization of $n$ is known, this running time can be substantially improved.

\subsection{The case $\swm(n) = 3$}

If $\swm(n) = 3$, then 
$n\cdot\msw(n) = 2^k + 2^\ell + 1$ for
some integers $k > l \geq 1$.
It follows that $2^k + 2^l \equiv \modd{-1} {n}$, which means that we are dealing with a 2-SUM problem. This can be solved in
$O(n \log n)$ time using sorting and binary search.  (Briefly, compute a table of powers of $2$, mod $n$; sort them in ascending order, and then for each power $2^k$ use binary search to see if there is an $\ell$ such that $2^l \equiv \modd{-1-2^k} {n}$.)
Although this does not beat our $O(n)$ algorithm given below asymptotically, in many cases it will run more quickly because of the simplicity of the operations.  This is particularly true if the subgroup generated by $2$ (mod $n$) is small.

\section{A dynamic programming algorithm}
\label{algo2}

In this section we show how to check whether $n$ is sturdy using dynamic programming.

By Corollary~\ref{cor2}, we can restrict our attention to the
case where $n$ is odd.  In this case,
the powers of two $P_n = \{ 2^i \, : \, i \geq 0 \}$
form a cyclic subgroup of $(\Zee/(n))^*$, the multiplicative group of integers relatively prime to $n$.  Define $\nu = \ord_2 n = |P_n|$, the order of $2$ in the group $(\Zee/(n))^*$.
Hence, to find a positive multiple of
$n$ whose binary expansion contains exactly $k$ $1$'s, it suffices to find
an appropriate linear combination of $k$ elements of $P_n$ (counted with
repetition)  that sums to
$0$ (mod $n$).  More precisely, we need to find non-negative integers
$a_1, a_2, \ldots, a_i$ and distinct elements $e_1, e_2, \ldots, e_i \in P_n$
such that 
\begin{align*}
a_1 e_1 + \cdots + a_i e_i &\equiv \modd{0} {n} \\
a_1 + a_2 + \cdots + a_i &= k ,
\end{align*}
for integers $k \geq 1$.  
This is the kind of problem that dynamic programming is well-suited for. To restrict the amount of work required in a dynamic programming algorithm for this we make use of the following lemma.

\begin{lemma}
\label{lem:distinctsummands}
For an integer base $b\geq 2$ let $P_{b,n}=\{b^i \bmod n : i \in\mathbb{N} \}$ and suppose that $e_1, e_2, \dots, e_m$ are the distinct elements of $P_{b,n}$. If there exist non-negative integers $a_1, a_2, \dots, a_m$ such that $a_1e_1 + a_2e_2 + \cdots + a_me_m \equiv \modd{0} {n}$ and $a_1+\cdots + a_m=k$, then there exist non-negative integers $c_1, c_2, \dots, c_m < b$ and $l\leq k$ such that $c_1e_1 + c_2e_2 + \cdots + c_me_m \equiv \modd{0} {n}$ and $c_1 + \cdots + c_m = l$.
\end{lemma}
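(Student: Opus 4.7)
The plan is to prove this by a descent argument based on ``carrying'' in base $b$. The essential observation is that $P_{b,n}$ is closed under multiplication by $b$ modulo $n$: if $e_j \equiv b^i \pmod{n}$, then $b \cdot e_j \equiv b^{i+1} \pmod{n}$ is again an element of $P_{b,n}$, and hence equals some $e_{j'}$.

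First I would set up a single reduction step. Given a tuple $(a_1, \ldots, a_m)$ of non-negative integers satisfying $\sum_i a_i e_i \equiv 0 \pmod{n}$, if some $a_j \geq b$, define a new tuple by decreasing $a_j$ by $b$ and increasing $a_{j'}$ by $1$, where $j'$ is the index for which $e_{j'} \equiv b e_j \pmod{n}$. I would then verify two invariants: (a) the new sum is still congruent to $0$ modulo $n$, since subtracting $b e_j$ and adding $e_{j'} \equiv b e_j \pmod{n}$ changes the sum by $0$ modulo $n$; and (b) the total $\sum_i a_i$ strictly decreases, specifically by $b - 1 \geq 1$. The edge case $j = j'$ (which occurs iff $(b-1)e_j \equiv 0 \pmod{n}$) poses no problem: the net change in $a_j$ is still $-(b-1)$.

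Next, I would apply this reduction iteratively, starting from the given $(a_1, \ldots, a_m)$. Since the total $\sum_i a_i$ is a non-negative integer that strictly decreases at each step, the process terminates after finitely many steps. At termination no coefficient is $\geq b$, so setting $c_i$ equal to the terminal value of $a_i$ gives $0 \leq c_i < b$ for every $i$. By invariant (a) we still have $\sum_i c_i e_i \equiv 0 \pmod{n}$, and by invariant (b) the final total $l := \sum_i c_i$ satisfies $l \leq k$.

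I do not anticipate a serious obstacle; the only small things to be careful about are verifying that the $j = j'$ case behaves as claimed and noting that $l = 0$ is permitted by the conclusion (it simply corresponds to the trivial empty sum). Conceptually, the reduction is exactly the base-$b$ carry operation lifted to the multiset of powers of $b$ modulo $n$: grouping $b$ copies of $e_j$ into one copy of $b e_j \bmod n$ is the shift-by-one carry, and iterating it terminates for the usual reason that carrying converts an arbitrary base-$b$ digit string into one with digits less than $b$.
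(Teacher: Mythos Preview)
Your argument is correct and is essentially the same as the paper's: both perform the base-$b$ carry reduction, replacing $b$ copies of $e_j$ by one copy of $e_{j'}\equiv b e_j\pmod n$ and iterating until every coefficient is below $b$. You are, if anything, slightly more careful in noting the $j=j'$ edge case and the termination via the strictly decreasing total.
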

\begin{proof}
Suppose we have non-negative integers $a_1, a_2, \dots, a_m$ such that $a_1e_1 + a_2e_2 + \cdots + a_me_m \equiv \modd{0} {n}$ and $a_1+\cdots + a_m=k$. If we have $a_1, a_2, \dots, a_m < b$ then we are done. So suppose that there is some $i$ such that $a_i\geq b$. Let $j$ be the integer such that $be_i \equiv \modd{e_j} {n}$. 
Then we can take 
$$\left(\sum_{r=1, r\neq i, r\neq j}^{m}a_re_r\right) + (a_i - b)e_i + (a_j + 1)e_j \equiv \modd{0} {n},$$ giving $$\left(\sum_{r=1, r\neq i, r\neq j}^m a_r \right) + (a_i - b) + (a_j + 1) = a_1 + \cdots + a_m - b + 1 = k - b + 1 < k.$$ 
Setting $a_i:= a_i - b$ and $a_j := a_j + 1$ and $k:= k-b+1$, we can repeat this argument until $a_1, \dots, a_m < b$.
\end{proof}

Let us start with determining whether $n$ is sturdy. It suffices to solve the problem of the previous paragraph for $1 \leq k < s_2 (n)$.    The idea is that we will fill in the entries of a $3$-dimensional boolean array
$x$ with the following meaning:  the entry
$x[i,j,r]$ is {\tt true} if and only if
the integer $j$ has a representation as a sum of $i \geq 1$ powers of $2$, using as summands only the first $r$ elements of the set $P_n$ without repetition.    We fill in the array $x$ in increasing order of $r$.

For initialization, we set all elements
of $x$ to {\tt false}, except that we set 
$x[0,0,r]$ to {\tt true} for $0 \leq r \leq \nu$.

To solve the remaining three problems, we need to
record more information than just the ability
to represent $j$ as a sum of powers of $2$.
The integer array $y[i,j,r]$ is used to record the smallest integer congruent to $j$ (mod $n$) that is the sum of exactly $i$
powers of $2$ (without repetition), using only
the first $r$ elements of the set $P_n$.

The complete algorithm is given in Appendix~\ref{appa}.

Our dynamic programming algorithm has three nested loops,
which gives a running time of $O(\nu \cdot n \cdot s_2 (n))$.  Since $\nu = \ord_n 2$ could be as large as $n-1$, and $s_2(n)$ could be as big
as $\log_2 n$, this gives a worst-case running time of $O(n^2 \log n)$, where we are measuring the run time in terms of RAM operations on integers of size about $n$.
This means that this algorithm will only
be feasible for integers smaller than about
$10^7$.





\section{An algorithm based on finite automata}
\label{algo3}

In this section we provide a different,
much faster algorithm for checking sturdiness, based on finite automata.

The idea is simple.  It is easy to create a deterministic
finite automaton (DFA) accepting the binary representations
of the positive integers divisible by $n$.   Such an automaton has
$n$ states \cite{Alexeev:2004} and exactly one final state.  Next, we can easily construct a DFA $A_t$ accepting those strings
starting with a $1$ and having at most $t$ ones. 
Using the standard ``direct product''
construction \cite[pp.~59--60]{Hopcroft&Ullman:1979}, we can construct a DFA $M_t$ of $(t+2)n$ states for the intersection of these
two languages; it has exactly $t+1$ final states $f_0, f_1, \dots, f_t$ corresponding to positive integers divisible by $n$ with $0,1,\dots, t$ $1$'s respectively.  Then some multiple of $n$ has at most $t$ $1$'s iff $M_t$ accepts
at least one string.  We can test this condition (and even find the lexicographically
least string accepted) using breadth-first search to decide if some $f_i$ for $0\leq i\leq t$ is reachable from the
start state of $M_t$, in linear time
in the size of $M$, so in $O((t+2)n)$ time.

By choosing $t=s_2(n) - 1$ we can determine if $n$ is sturdy in $O(n \log n)$ steps.
Similarly, by allowing the breadth-first search to run to completion and keeping track
of the least string in radix order used to reach each state, we can recover $\swm(n)$,
$\msw(n)$, and $\mfw(n)$ by examining each of the final states for whether or not they
were visited in the search and looking at the least string in radix order used in each
case. More precisely, the value of $\swm(n)$ is the least integer $i$ such that final state 
$f_i$ in $M_t$ is reached in the breadth-first search, or $s_2(n)$ if no final state is reached.
The value of $\msw(n)$ is the least string in radix order used to reach $f_{\swm(n)}$ 
interpreted as an integer and divided by $n$, or $1$ if $n$ is sturdy. The value of $\mfw(n)$, if
it is defined, is the least string in radix order among all such strings used to reach a
final state, interpreted as an integer and divided by $n$.   To avoid needing to store the representation of large integers, we instead store the exponents of the current power of $2$ and a pointer to the previous power.  From this linked list we can reconstruct the appropriate number.


\begin{theorem}
We can decide whether $n$ is sturdy $O(n \log n)$ steps.  In the same time
bound we can compute $\swm(n)$ and $\msw(n)$.
If $n$ is flimsy, we can compute $\mfw(n)$ in
the same time bound.
\end{theorem}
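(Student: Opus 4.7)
The plan is to formalize the construction sketched in the paragraphs above. First I would build a DFA $D_n$ with state set $\Zee/n\Zee$, initial state $0$, unique accepting state $0$, and transitions $q \mapsto (2q + b) \bmod n$ on input bit $b \in \{0,1\}$, which reads binary strings most-significant-bit-first and accepts exactly those strings whose value is a multiple of $n$. To restrict attention to positive multiples I would intersect with a small DFA requiring the first symbol read to be $1$. Second, I would build a counting DFA $A_t$ on $t+2$ states recording the number of $1$'s seen so far (with one sink state for counts exceeding $t$), accepting exactly those strings with at most $t$ ones. The standard product construction then gives a DFA $M_t$ on at most $(t+2)n$ states whose accepting states are naturally indexed $f_0, f_1, \ldots, f_t$, where $f_i$ is the state simultaneously recording ``residue $0$ modulo $n$'' and ``exactly $i$ ones read so far.''

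The second step is to run breadth-first search from the start state of $M_t$, adopting the convention that the $0$-transition is queued before the $1$-transition; this guarantees that the first string along which any given state is discovered is the radix-least string reaching it. I would take $t = s_2(n) - 1$, so that $M_t$ has $O(n \log n)$ states and the BFS runs in $O(n \log n)$ time. With this choice, $n$ is sturdy iff no $f_i$ is visited, $\swm(n)$ equals the least $i$ for which $f_i$ is visited (and $s_2(n)$ otherwise), and once $\swm(n)$ is known, the value $\msw(n)$ (or $\mfw(n)$, in the flimsy case) is recovered from the radix-least string reaching the corresponding accepting state, divided by $n$.

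The one delicate point I expect is recovering the witness multiplier without blowing the time budget, since the relevant multiple of $n$ can itself have $\Theta(n)$ bits, so storing it explicitly at every BFS node would be disastrous. Instead, at each BFS node I would store only the bit-position of the most recent $1$ read (equivalently, the exponent of $2$ it contributed) together with a pointer to its BFS predecessor. Tracing these back-pointers from an accepting state produces the list of exponents present in some multiple $kn$ in time linear in that list, after which $kn$ can be assembled and divided by $n$ in one pass using $O(n \log n)$ bit-operations. This bookkeeping fits comfortably inside the overall $O(n \log n)$ budget, delivering all the stated claims.
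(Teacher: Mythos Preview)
Your construction is essentially identical to the paper's: the same product automaton $M_t$ with $t=s_2(n)-1$, the same BFS-based reachability test, and the same linked-list trick (storing at each node the exponent of the last $1$ together with a back-pointer) to recover the witness without storing $\Theta(n)$-bit integers at every node. So the overall approach and running-time analysis match the paper's argument.

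One point needs tightening. You write that ``$\msw(n)$ (or $\mfw(n)$, in the flimsy case) is recovered from the radix-least string reaching the corresponding accepting state.'' This is correct for $\msw(n)$, which comes from the radix-least string reaching $f_{\swm(n)}$, but it is not correct for $\mfw(n)$. The minimal flimsy witness $\mfw(n)=\min\{k:s_2(kn)<s_2(n)\}$ need not achieve $s_2(kn)=\swm(n)$; for instance $n=27$ has $\swm(27)=2$, $\msw(27)=19$, but $\mfw(27)=3$ (since $s_2(81)=3<4=s_2(27)$). To compute $\mfw(n)$ you must take the radix-least string over \emph{all} accepting states $f_0,\ldots,f_{s_2(n)-1}$, not just $f_{\swm(n)}$. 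This is exactly what the paper does, and it fits in the same $O(n\log n)$ budget; you just need to state it correctly.
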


This algorithm is practical for $n$ up to about
$10^{10}$.  The main constraint is likely to be space and not time.  

\section{Improving the automaton-based algorithm}
\label{algo4}

With a small modification to this idea of using a breadth-first search on the graph defined by automaton $M$, we can make further improvements to the time complexity. Consider the deterministic finite automaton $M_n$ accepting the binary representations of the positive integers divisible by $n$. We then define a directed graph $G_n$ with vertices given by the states of $M_n$ and directed, weighted edges given by the transitions of $M_n$ where transitions on the symbol 0 are given an edge weight of 0 in $G_n$ and transitions on the symbol 1 are given an edge weight of 1 in $G_n$. We augment $G_n$ with one additional vertex, $v_s$, with a single outgoing edge of weight 1 to the vertex corresponding to the state reached when $M_n$ reads any input of the form $0^*1$. If $v_f$ is the vertex corresponding to the accepting state in $M_n$, then there is a path from $v_s$ to $v_f$ of weight $k$ if and only if there is a non-zero multiple of $n$ with Hamming weight $k$. The shortest path problem on a graph $G=(V,E)$ with edge weights in $\{0,1\}$ can be solved in time $O(|V|+|E|)$ using a variation of the breadth-first search algorithm. In place of the queue used in a standard breadth-first search, we use a double ended queue. We process a node by traversing incident edges of weight 0 and pushing the nodes reached to the front of the queue if they have not been processed already. Edges of weight 1 are also traversed, but the nodes reached are pushed to the back of the queue provided that they have not been processed already. After a node has been processed, the next node at the front of the queue is dequeued and processed if it has not been processed already, otherwise it is just discarded. The depth of the search can be tracked as in a standard breadth-first search. Thus we achieve the following improvement.

\begin{theorem}
We can test if $n$ is sturdy in $O(n)$ steps.
\end{theorem}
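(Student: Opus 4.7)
The plan is to follow the setup already laid out just before the theorem statement: build the DFA $M_n$ of size $O(n)$ accepting binary representations of positive multiples of $n$, convert it into the weighted digraph $G_n$ with edge weights in $\{0,1\}$ (together with the auxiliary source vertex $v_s$), and then solve single-source shortest paths from $v_s$ to the accepting vertex $v_f$ using the deque-based variant of BFS for $0/1$-weighted graphs. Once the shortest-path distance $d(v_s, v_f)$ is computed, we compare it to $s_2(n)$: $n$ is sturdy iff $d(v_s, v_f) = s_2(n)$, and flimsy otherwise.

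First I would verify the two structural facts needed for the $O(n)$ bound. The DFA $M_n$ has exactly $n$ states (one for each residue class mod $n$), and each state has out-degree $2$ (one transition on $0$, one on $1$), so $G_n$ has $|V| = n+1$ vertices and $|E| = 2n+1$ edges. Next I would confirm the correspondence between paths in $G_n$ and multiples of $n$: a path from $v_s$ to $v_f$ spelling out a binary string $w$ corresponds to the multiple $[w]_b$, and its total edge weight equals the number of $1$'s in $w$, i.e., the Hamming weight. Therefore a path from $v_s$ to $v_f$ of weight $k$ exists iff there is a positive multiple of $n$ with exactly $k$ ones in its binary expansion, so $d(v_s, v_f) = \swm(n)$.

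Next I would describe and justify the 0-1 BFS: maintain a double-ended queue, initialize $d(v_s) = 0$, and when processing a vertex $u$, relax each outgoing edge $(u,v)$ of weight $0$ by pushing $v$ to the front of the deque (with tentative distance $d(u)$), and each outgoing edge of weight $1$ by pushing $v$ to the back (with tentative distance $d(u)+1$). Using the standard argument, distances are dequeued in nondecreasing order, so the first time a vertex is popped its distance is optimal, and each vertex is processed $O(1)$ times and each edge examined $O(1)$ times. The total running time is thus $O(|V|+|E|) = O(n)$.

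The main subtlety — which I expect to be the only non-routine point — is the careful handling of the auxiliary source $v_s$ so that we really count positive multiples (excluding the trivial $k=0$ corresponding to staying at the initial state of $M_n$ via $0$-transitions), and the correctness argument for the deque BFS that distances come off the deque monotonically. Both are standard but deserve a sentence. Once these are in place, the theorem follows: we compute $\swm(n)$ in $O(n)$ time and declare $n$ sturdy iff $\swm(n) \geq s_2(n)$.
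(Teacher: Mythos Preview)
Your proposal is correct and follows essentially the same approach as the paper: build the $(n{+}1)$-vertex, $0/1$-weighted digraph $G_n$ from the divisibility DFA $M_n$ with the auxiliary source $v_s$, run the deque-based $0/1$-BFS to compute $\swm(n)=d(v_s,v_f)$ in $O(|V|+|E|)=O(n)$ time, and compare with $s_2(n)$. You even spell out the edge count and the monotone-dequeue correctness argument for the $0/1$-BFS a bit more explicitly than the paper does.
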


From this approach we are still able to construct an example of a multiple of $n$ achieving the minimum Hamming weight over all multiples of $n$. It is simply a matter of maintaining the path used in the breadth-first search algorithm finding the shortest path from $v_s$ to $v_f$ in $G_n$. However, there is no guarantee that this is the least multiple of $n$ with this property. To find the least multiple we can use the linear-time algorithm to first determine the minimum Hamming weight. For minimum Hamming weight $k$, we take the direct product of automaton $M_n$ accepting the base-2 representations of all multiples of $n$ and the automaton accepting all binary strings with exactly $k$ 1's. A breadth-first search on this product automaton gives the least non-zero multiple of $n$ with Hamming weight $k$. This second breadth-first search has worst case time complexity $O(n \log n)$, giving overall complexity $O(n \log n)$ for finding the least non-zero multiple of $n$ having the minimum Hamming weight over all non-zero multiples of $n$.

\section{Another breadth-first search approach}
\label{algo5}
We can take advantage of Lemma~\ref{lem:distinctsummands} to evaluate sturdiness and compute $\swm$ and $\msw$ using a breadth-first search on a different graph structure. As before, to test the sturdiness of an integer $n\geq 3$, we construct an $(n+1)$-vertex graph with $n$ of the vertices representing the distinct residue classes modulo $n$, which we will refer to as $[0],[1],\dots,[n-1]$, and one special vertex, $v_0$, corresponding to the number 0. The graph contains a directed edge from vertex $[x]$ to vertex $[y]$ if and only if $x+2^j \equiv \modd{y} {n}$ for some integer $j\geq 0$. Similarly, there is an edge from $v_0$ to $[y]$ if and only if $2^j\equiv \modd{y} {n}$. Hence each vertex has out-degree $\nu = \ord_n 2$. 
The idea of this construction is to treat traversing an edge from $[x]$ to $[x+2^j]$ as choosing to use the $j$th power of 2 as a summand in a summation to a value congruent to 0 modulo $n$. Thus, to compute $\swm(n)$ we are looking for the length of the shortest path from $v_0$ to $[0]$ and this can be found via a breadth-first search. Furthermore, by keeping track of the smallest sum required to reach each state, we can also recover $\msw(n)$ from such a breadth-first search. Rather than running the breadth-first search to completion, we can terminate as soon as we reach a depth equal to $s_2(n)$, as we will know by then whether or not $n$ is sturdy.

With this approach we can take advantage of the structure of the graph to speed up testing for sturdiness. If during the breadth first search we visit a node $[x]$ such that $[n-x]$ has already been visited, then since the length of the shortest path from $[x]$ to $[0]$ is equal to the length of the shortest path from $v_0$ to $[n-x]$ either we will know that $n$ is not sturdy, or that it is not necessary to continue searching from $[x]$. This greatly, improves the efficiency of the testing for sturdiness.

The complexity of this approach, for evaluating sturdiness, and computing $\swm(n)$ and $\msw(n)$ is $O(n^2)$ since we are performing a breadth-first search on a graph with $n+1$ vertices each with $\nu = O(n)$ outgoing edges. In practice this approach seems to perform much better than our naive $O(n^2)$ upper bound would suggest, especially in testing for sturdiness, due to the early exit conditions. 

\section{Running time comparison}
\label{running-time}

To demonstrate how these algorithms behave in practice, we compiled timing information for each of the approaches and each of the four functions of interest for consecutive integers starting from $1$. Each of the algorithms are implemented as described above. However, before applying each algorithm the baby-step giant-step algorithm, as in Section~\ref{babystep}, is used to exit faster in those cases where $\swm(n)=2$.
Running times for the $\mfw$ function with the \texttt{order\_deg\_bfs} algorithm are not given because there does not seem to be a natural approach for using this idea to evaluate $\mfw$.
The computations producing the given running times were performed on macOS Catalina version 10.15.2 on a 2.3 GHz Intel Core i5 processor. Implementations of our algorithms can be found in the Github repository\\
\centerline{{\tt  https://github.com/FinnLidbetter/sturdy-numbers } \ \ .}

In the tables below, the algorithmic approaches are named according to the commands used in the program-runner in the Github repository. Here, the \texttt{dp} algorithm refers to the dynamic programming approach described in Section~\ref{algo2}, the \texttt{aut} algorithm refers to the automaton-based approach described in Section~\ref{algo3}, the \texttt{bfs01} algorithm refers to the improved automaton-based approach described in Section~\ref{algo4}, and the \texttt{order\_deg\_bfs} algorithm refers to the alternative breadth-first search approach described in Section~\ref{algo5}.
\begin{table}[H]
    \centering
    \begin{tabular}{|c|c|c|c|c|}
    \hline
    Algorithm                & \texttt{is\_sturdy} & $\swm$  & $\msw$  & $\mfw$  \\ \hline
    \texttt{dp}              & 22836               & 2667063 & 5675228 & 5167556 \\ \hline
    \texttt{aut}             & 1042                & 1050    & 1473    & 1430    \\ \hline
    \texttt{order\_deg\_bfs} & 322                 & 1646    & 4339    & ---     \\ \hline
    \texttt{bfs01}           & 224                 & 226     & 650     & 1416    \\ \hline
    \end{tabular}
    \caption{Running time in milliseconds for each of the algorithms to evaluate the functions for all values of $n$ (odd and even) between $1$ and $2000$ inclusive.}
    \label{tab:running_time_1-2000}
\end{table}

\begin{table}[H]
    \centering
    \begin{tabular}{|c|c|c|c|c|}
    \hline
    Algorithm                & \texttt{is\_sturdy} & $\swm$  & $\msw$ & $\mfw$ \\ \hline
    \texttt{aut}             & 31950               & 31990   & 42229  & 41747  \\ \hline
    \texttt{order\_deg\_bfs} & 7378                & 164543  & 439794 & ---    \\ \hline
    \texttt{bfs01}           & 5209                & 5207    & 15515  & 41761  \\ \hline
    \end{tabular}
    \caption{Running time in milliseconds for the algorithms to evaluate the functions for all values of $n$ (odd and even) between $1$ and $10000$ inclusive. The dynamic programming algorithm was not included because it was not feasible to evaluate the functions for all integers between $1$ and $10000$ with this approach.}
    \label{tab:running_time_1-10000}
\end{table}

\section{Computational results}
\label{results}

 Sequence \seqnum{A143027} in the OEIS \cite{Sloane} gives a list of the first few sturdy primes, namely,
 $$2, 3, 5, 7, 17, 31, 73, 89, 127, 257, 1801,
 2089, 8191, 65537, 131071, 178481, 262657, 524287, 2099863,$$
and mentions $616318177$ as an additional sturdy prime, although it was not known if this was the next sturdy prime to occur in the sequence.
Using our methods, we checked all primes
$p < 2^{32}$.  We confirmed the results in the OEIS and 
found that $616318177$ and $2147483647$ are the only remaining sturdy primes in that range.  The computation took approximately 23 hours on a laptop.

We also computed frequency counts for
the values of $\swm$ for odd $n>1$, not just primes, and they are given in Table~\ref{tab1}.
\begin{table}[H]
\begin{center}
\begin{tabular}{ *{5}{|c}| } 
 \hline
 $\swm(n)$ & $n < 2^{20}$ & $2^{20} < n < 2^{21}$ & $2^{21} < n < 2^{22}$ & $2^{22} < n < 2^{23}$ \\
 \hline
 2  & 115931 & 107650 & 208333 & 403823  \\ \hline
 3  & 286681 & 294938 & 596522 & 1205753 \\ \hline
 4  & 83895  & 83958  & 168138 & 336448  \\ \hline
 5  & 19287  & 19242  & 38566  & 77071   \\ \hline
 6  & 9903   & 9892   & 19812  & 39635   \\ \hline
 7  & 4246   & 4265   & 8510   & 17023   \\ \hline
 8  & 2274   & 2269   & 4548   & 9104    \\ \hline
 9  & 1027   & 1030   & 2058   & 4119    \\ \hline
 10 & 529    & 527    & 1059   & 2118    \\ \hline
 11 & 256    & 257    & 514    & 1024    \\ \hline
 12 & 130    & 131    & 260    & 521     \\ \hline
 13 & 64     & 64     & 128    & 256     \\ \hline
 14 & 32     & 33     & 64     & 129     \\ \hline
 15 & 16     & 16     & 32     & 64      \\ \hline
 16 & 8      & 8      & 16     & 32      \\ \hline
 17 & 4      & 4      & 8      & 16      \\ \hline
 18 & 2      & 2      & 4      & 8       \\ \hline
 19 & 1      & 1      & 2      & 4       \\ \hline
 20 & 1      & 0      & 1      & 2       \\ \hline
 21 & 0      & 1      & 0      & 1       \\ \hline
 22 & 0      & 0      & 1      & 0       \\ \hline
 23 & 0      & 0      & 0      & 1       \\ \hline
\end{tabular}
\end{center}
\label{tab1}
\caption{Counts of $\swm$.}
\end{table}

We also computed counts of sturdy numbers up to $10^i$ for $i = 1, 2,3, 4, 5, 6$, and they are given below:
\begin{table}[H]
    \centering
    \begin{tabular}{c|c}
        $i$ & Number of sturdy numbers $<10^i$ \\
        \hline
         1 & 5 \\
         2 & 22 \\
         3 & 81 \\
         4 & 292 \\
         5 & 995 \\
         6 & 3438
    \end{tabular}
    \caption{Counts of sturdy numbers}
    \label{sturdycount}
\end{table}

\section{Numbers with few $0$'s}
\label{moreapp}

We can also use finite automata to determine when numbers with few 0's are flimsy.  More precisely, for each pair of integers $ j, k$ we can build a DFA $M_2(j,k)$ accepting those $(n)_2$ for
which $(n)_2$ has $j$ 0's and $(kn)_2$ has more than  $j+t$ 0's, where
$t = | (kn)_2 | - | (n)_2 |$.  Such an $n$ is guaranteed to be flimsy.  We can determine $t$ by reading the input $n$,
least significant digit first, and computing $(kn)_2$ on the fly, keeping track of the carries.

Let $j$ be a fixed natural number.  By choosing an appropriate set of flimsy witnesses $k$ (which can be guessed empirically), we can determine all flimsy numbers having exactly $j$ $0$'s in their binary representation.   We do this by computing the DFA's $M_2(j,k)$ and unioning them together to get a final automaton $M'_j$. We expect there to be a finite set of ``sporadic'' sturdy exceptions,  and (according to Theorem~\ref{thm9}) an infinite set of sturdy exceptions consisting of those numbers with binary representation of the form $s 1^i \overline{s}$, where $s$ begins with $1$ and ends with $0$.  This expectation can then be verified by
considering the language accepted by $M'_j$; the finite set of sturdy exceptions can be tested using our algorithms previously discussed.  The multipliers we used in constructing $M'_j$ are the odd numbers $\leq 2^{j+1} + 1$.

With these ideas we can prove
the following theorem.
\begin{theorem}
\leavevmode
\begin{itemize}
    \item[(a)]  Every integer with no $0$'s is sturdy.
    \item[(b)] Every odd integer with one $0$
    is flimsy, with the exception of $5 = [101]_2$, and is proven flimsy by multiplier $3$ or $5$.
    \item[(c)] Every odd integer with two $0$'s is flimsy,
    with the exception of
    $51$ and numbers of the form
    $1 0 1^i 0 1$, $i \geq 0$, which
    are all sturdy.
    \item[(d)]  Every
    odd integer with three $0$'s is flimsy, with the exception of
    $17, 85, 89, 455$ and numbers of the form
    $1001^i 011$ or
    $1101^i001$, $i \geq 0$, which are
    all sturdy.
    \item[(e)] Every odd
    integer with four $0$'s is flimsy,
    with the exception of
    $33, 69, 73, 153, 3855$,
    and numbers of the
    form $1000 1^i 0111$,
    $1100 1^i 0011$,
    $1010 1^i 0101$,
    $1110 1^i 0001$,
    $i \geq 0$,
    which are all sturdy.
    \item[(f)] Every odd integer with five $0$'s is flimsy, with the exception of $65, 133, 161, 267, 275, 1365$,
    $31775$, and numbers specified by
    Theorem~\ref{thm9}.
    \item[(g)] Every odd integer with six $0$'s is flimsy, with the exception of $129, 259, 261, 273, 385, 525$,\\
    $549, 561, 585, 645, 657, 705, 771, 777, 801, 1729, 1801, 2275, 3185, 11565, 13107, 258111$, and numbers specified by
    Theorem~\ref{thm9}.
    \item[(h)] Every odd integer with seven $0$'s is flimsy, with the exception of $257,515,517,529,1035,1065$,\\
    $1105,1155,1157,1185,1285,1545,1665,2077,2201,2325,2449,2573,2697,2821,2945,19065$,\\
    $19275,21845,26985,95325,2080895$, and numbers specified by
    Theorem~\ref{thm9}.
    \item[(i)] Every odd integer with eight $0$'s is flimsy, with the exception of $513,1027,1029,1057,1281$,\\
    $2055,2085,2089,2097,2115,2145,2193,2313,2337,2563,2565,2625,3075,3105,3585,4123$,\\
    $4185,4371,4389,4433,4619,4675,4681,4867,4929,5187,6169,6417,6665,6913,8253,8505$,\\
    $8525,8645,8757,9009,9261,9513,9765,10017,10269,10465,10521,10773,11025,11277$,\\
    $11529,11781,12033,12483,13505,14497,18631,25623,34695,39321,42405,50115,57825$,\\
    $158875,222425,774333,16711935$, and numbers specified by
    Theorem~\ref{thm9}.
    \item[(j)] Every odd integer with nine $0$'s is flimsy, with the exception of
    $1025,2051,2057,2065,2177$,\\
    $3073,4131,4165,4233,4361,4369,4417,4641,5129,5185,6273,8215,8277,8339,8401,8711$,\\
    $8773,8835,8897,10261,10385,10757,10881,12307,12369,12803,12865,14353,14849,16443$,\\
    $16569,16835,16947,17073,17451,17577,17745,17955,18081,18459,18585,18963,19089$,\\
    $19467,19593,19971,20097,24605,24633,25025,25137,25641,26145,26649,26691,27153$,\\
    $27657,28161,28679,32893,33401,33909,34417,34925,35433,35941,36449,36957,37465$,\\
    $37973,38481,38989,39497,40005,40513,41021,41529,41769,42037,42545,43053,43561$,\\
    $44069,44577,45085,45593,46101,46609,47117,47625,48133,48641,178481,285975,349525$,\\
    $413075,476625,1290555,1806777,1864135,6242685,133956095$, and numbers specified by
    Theorem~\ref{thm9}. \end{itemize}
\end{theorem}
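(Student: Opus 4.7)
The plan is to execute, for each $j\in\{0,1,\dots,9\}$, the automaton construction sketched in the paragraph preceding the theorem, and then read off the exceptional language.

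For part (a), the only odd integers with no $0$'s are the numbers $2^k-1$, which are sturdy by Corollary~\ref{cor6} applied with $m=1$. So this case is immediate and requires no automaton work.

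For parts (b)--(j), I would proceed uniformly as follows. Fix $j$ and let $K_j=\{k\ \text{odd}:3\le k\le 2^{j+1}+1\}$. For each $k\in K_j$, construct the DFA $M_2(j,k)$ that reads $(n)_2$ least-significant digit first, simulates the multiplication by $k$ with a bounded carry, and accepts exactly those $n$ such that $(n)_2$ contains exactly $j$ zeros and $(kn)_2$ contains more than $j+t$ zeros, where $t=|(kn)_2|-|(n)_2|$. Any $n$ accepted by $M_2(j,k)$ is flimsy with witness $k$, since $s_2(kn)=|(kn)_2|-(\text{zeros in }kn)<|(kn)_2|-(j+t)=|(n)_2|-j=s_2(n)$. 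Form the union $M'_j=\bigcup_{k\in K_j}M_2(j,k)$, intersect it with the regular language of binary representations of odd positive integers containing exactly $j$ zeros, and then complement within that restricted language to obtain a DFA $N_j$ whose language $L(N_j)$ is the set of candidate sturdy numbers with $j$ zeros that are not proven flimsy by any $k\in K_j$.

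The next step is to inspect $L(N_j)$, which is a regular language on a small alphabet. By minimizing $N_j$ and tracing its strongly connected components, one decomposes $L(N_j)$ into a finite set of "short" strings together with finitely many infinite families given by regular expressions of the form $s\,1^i\,\overline{s}$ (with $s$ beginning with $1$ and ending with $0$). For each infinite family I would verify that it matches the template of Theorem~\ref{thm9}, which immediately certifies sturdiness. For each sporadic string I would apply any of the sturdiness tests from Sections~\ref{algo}--\ref{algo4} (the $O(n)$ automaton method suffices) and check that it really is sturdy; this exhibits the lists displayed in (b)--(j). Part (a) aside, each part then reads off directly from the minimized $N_j$.

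The main obstacle is justifying the choice of $K_j$: a priori there is no guarantee that multipliers bounded by $2^{j+1}+1$ suffice to witness every flimsy $n$ with $j$ zeros. The construction only verifies this a posteriori, via the fact that $L(N_j)$, after stripping the sporadic exceptions and the Theorem~\ref{thm9} families, is empty. If $L(N_j)$ contained a residual infinite family not covered by Theorem~\ref{thm9}, or a sporadic string that turned out on direct test to be flimsy, one would enlarge $K_j$ and rerun; the content of the theorem is that no such enlargement is required for $j\le 9$. A secondary technical point is the bounded-carry simulation of multiplication by $k$ inside $M_2(j,k)$, which requires $O(\log k)$ carry states and must be combined with a counter (of size $j+1$) for the number of zeros seen; making sure the product automaton remains small enough to minimize for $j=9$ is the practical bottleneck of the computation.
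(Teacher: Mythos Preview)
Your proposal is correct and follows essentially the same approach as the paper: for each $j$ the paper also builds the DFAs $M_2(j,k)$ for the odd multipliers $k\le 2^{j+1}+1$, unions them into $M'_j$, and then checks that the complement (within the language of odd integers with exactly $j$ zeros) consists precisely of the Theorem~\ref{thm9} families together with a finite list of sporadic exceptions, each of which is verified sturdy by the algorithms of Sections~\ref{algo}--\ref{algo4}. Your a posteriori justification of the multiplier bound and your handling of part (a) via Corollary~\ref{cor6} are also exactly in line with the paper.
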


Based on this theorem, we make the following
conjecture.
\begin{conjecture}
Every number with $j$ $0$'s
is flimsy, with exceptions of the form $s 1^i \overline{s}$, $i \geq 0$, where $|s| = j$ and $s$ begins with $1$ and
ends with $0$,
and only finitely many additional exceptions.
\end{conjecture}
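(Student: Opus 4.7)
The plan is to push the automaton-theoretic strategy used for $j\leq 9$ in the preceding theorem to a uniform statement in $j$. Fix $j$ and let $L_j\subseteq\{0,1\}^*$ be the regular language of binary representations of odd $n$ containing exactly $j$ zeros, and let $T_j\subseteq L_j$ be the regular sub-language of words $s\,1^i\,\overline{s}$ with $i\geq 0$, $|s|=j$, $s$ starting with $1$ and ending with $0$. By Theorem~\ref{thm9}, every integer represented by a word of $T_j$ is sturdy; so the conjecture amounts to saying $L_j\setminus T_j$ differs from the set of flimsy $n$ in $L_j$ by only finitely many elements.

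First I would fix a conjecturally sufficient finite family of witnesses $K_j$ (the natural candidate, based on the computations underlying parts (a)--(j), is the set of odd $k$ with $k\leq 2^{j+1}+1$). For each $k\in K_j$ build the DFA $M_2(j,k)$ from Section~\ref{moreapp} that reads $(n)_2$, tracks the base-$b$ complement structure and the carries arising in $kn$, and accepts iff the number of zeros in $(kn)_2$ strictly exceeds $j+(|kn|_2-|n|_2)$. Form $M'_j=\bigcup_{k\in K_j}M_2(j,k)$, intersect with the automaton for $L_j$, and let $E_j=L_j\setminus(L(M'_j)\cup T_j)$. Because $E_j$ is regular, finiteness of $E_j$ is decidable for each fixed $j$ --- which is precisely the verification carried out in parts (a)--(j). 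The content of the conjecture lies in showing that $E_j$ is finite for every $j$.

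To get a uniform bound I would try a combinatorial approach in parallel. Given $n\in L_j$ with very long runs of $1$'s between its $j$ zeros, choose a multiplier adapted to a long run: multipliers of shape $1+2^r$ and $1-2^{r+1}$ produce carries that collapse adjacent blocks of $1$'s and strictly reduce the Hamming weight, provided the surrounding bit pattern is not palindromically balanced. The remaining $n$ --- those whose runs of $1$'s are all shorter than some $j$-dependent threshold --- form a regular language of at most exponentially many residues modulo the chosen witnesses, so standard pumping-style arguments would reduce the infinite part of $E_j$ to the palindromic configurations, i.e., exactly $T_j$.

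The hardest step will be ruling out the ``boundary'' infinite families: words that are not of the form $s\,1^i\,\overline{s}$ but for which every candidate witness in $K_j$ fails. Empirically no such family exists, but proving this requires showing that the carry pattern produced by multiplying such an $n$ by every $k\in K_j$ is forced (by an algebraic identity that pushes a carry all the way across the word) to regenerate the same weight, and that this forcing characterizes the palindromic shape $s\,1^i\,\overline{s}$ up to finitely many short exceptions. A cleaner route might be to replace $K_j$ by an infinite $j$-indexed family of structurally chosen witnesses (for instance, those arising from divisors of $2^{2j}-1$), reducing the uniform claim to a finite verification by Theorem~\ref{thm5}; finding the right such family is where I expect the main difficulty to lie.
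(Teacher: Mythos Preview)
The statement you are attempting to prove is labeled a \emph{Conjecture} in the paper, not a theorem; the authors offer no proof, only the computational evidence of the preceding result for $j\leq 9$. So there is no paper proof to compare against, and the relevant question is simply whether your proposal constitutes a proof. It does not.

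What you have written is a research outline, and you say so yourself. The first half merely restates the machinery of Section~\ref{moreapp}: build $M'_j$ from a finite witness set $K_j$, form $E_j=L_j\setminus(L(M'_j)\cup T_j)$, and observe that finiteness of $E_j$ is decidable for each fixed $j$. That is exactly what the paper already did for $j\leq 9$; it establishes nothing for general $j$. Your combinatorial paragraph then asserts that multipliers of shape $1+2^r$ or $1-2^{r+1}$ ``collapse adjacent blocks of $1$'s and strictly reduce the Hamming weight, provided the surrounding bit pattern is not palindromically balanced,'' but you neither define this balance condition precisely nor prove the claimed reduction; and the subsequent appeal to ``standard pumping-style arguments'' to handle the bounded-run case is not an argument at all --- pumping tells you about the structure of a regular language, not about whether the integers it encodes are flimsy. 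Finally, you explicitly concede the gap: ruling out infinite non-palindromic sturdy families ``is where I expect the main difficulty to lie,'' and you propose replacing $K_j$ by some yet-to-be-found witness family. A proof cannot end by identifying the step it has not done. Until someone supplies that missing uniform argument, the statement remains open, which is precisely why the paper records it as a conjecture.
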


%



\section{The $k$-flimsy numbers via formal language theory}
\label{flim}

In this section we describe a new approach, based on formal language theory, for understanding the distribution of the $k$-flimsy numbers.  Recall these are the numbers 
$$F_k = \{ n \geq  1 \ : \ s_2(kn) < s_2 (n) \}.$$   The majority of the results in this section are about the case $k = 3$, although in principle our technique can be applied to any odd $k$.

K{\'atai} \cite{Katai:1977} studied the difference $s_2(3n) - s_2 (n)$,
and proved that this quantity is essentially normally distributed,
in a certain sense.
Stolarsky \cite{Stolarsky:1980} conjectured that the natural density of the $k$-flimsy numbers is $1/2$ for all odd $k$.   His conjecture was later proved by W. M. Schmidt \cite{Schmidt:1983} and J. Schmid \cite{Schmid:1984}.  All
these results use rather sophisticated tools of number theory and probability.

In contrast, in this section
we obtain rather detailed results on the distribution of $3$-flimsy numbers through a (more or less) purely mechanical approach based on formal language
theory.  The main result of this section is the following:
\begin{theorem}
The number of $3$-flimsy numbers in the interval
$[2^{N-1}, 2^N)$ is
\begin{equation}
2^N \left({1 \over 4} - cN^{-1/2} + O(N^{-3/2}) \right),
\label{sturd}
\end{equation}
where $c = {{7\sqrt{6}} \over {24 \sqrt{\pi}}} \doteq 0.4030765$.
\label{clok}
\end{theorem}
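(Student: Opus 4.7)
Proof proposal.

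The plan is to realize $W(n) := s_2(3n) - s_2(n)$ as an additive functional computed by a small deterministic automaton reading the binary expansion of $n$, and then to extract the distribution of $W$ via the standard transfer-matrix / quasi-power machinery. The starting identity is $s_2(3n) = s_2(n + 2n) = 2s_2(n) - c(n)$, where $c(n)$ counts the carries produced in the binary addition $n + 2n$; hence $n$ is $3$-flimsy iff $W(n) = s_2(n) - c(n) < 0$. If the bits $n_0, n_1, \ldots, n_{N-1}$ of $n$ are scanned from least to most significant, the pair (previous bit, current carry) $(n_{i-1}, c_i)$ evolves as a $4$-state deterministic automaton $A$ over the alphabet $\{0, 1\}$. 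Each transition contributes a weight $\epsilon \in \{-1, 0, +1\}$ to $W$, and a fixed terminal adjustment absorbs the high-order bits of $3n$.

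Next, I would introduce the $4 \times 4$ transfer matrix $M(z)$ with $(s, t)$-entry $\sum_{b \in \{0,1\}} z^{\epsilon(s,b,t)} \mathbf{1}[s \to_b t]$. For appropriate boundary vectors $\mathbf{u}(z), \mathbf{v}(z)$ that encode the initial state and the leading-bit-is-$1$ constraint,
\[
G_N(z) \;=\; \sum_{n=2^{N-1}}^{2^N-1} z^{W(n)} \;=\; \mathbf{u}(z)^{\top} M(z)^{N-1}\, \mathbf{v}(z).
\]
Direct inspection shows that $M(1)/2$ is the irreducible aperiodic stochastic matrix of the carry chain, with stationary distribution $(1/3, 1/6, 1/6, 1/3)$ on the four states; in particular the stationary carry probability is $1/2$. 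Therefore the Perron eigenvalue $\Lambda(z)$ of $M(z)$ is analytic near $z = 1$ with $\Lambda(1) = 2$ and $\Lambda'(1) = 0$, and writing $\lambda(z) = \Lambda(z)/2$ one obtains a Perron--Frobenius expansion $\log \lambda(e^{it}) = -\sigma^2 t^2 / 2 + O(t^3)$ for some $\sigma^2 > 0$. The boundary factor in $G_N(z)$ contributes an $O(1)$ mean drift $\mu_0$, plus analytic corrections of every order in $1/N$.

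Hwang's quasi-power theorem combined with a third-order Edgeworth expansion (see Flajolet--Sedgewick, Ch.~IX) then yields the local limit law
\[
\#\{n \in [2^{N-1}, 2^N) : W(n) = k\} \;=\; \frac{2^{N-1}}{\sigma\sqrt{2\pi N}} \, e^{-(k - \mu_0)^2/(2\sigma^2 N)} \bigl(1 + O(N^{-1})\bigr),
\]
uniformly for $|k| \leq \sqrt{N}\log N$. Summing over $k < 0$ and applying Euler--Maclaurin to the tail gives
\[
\#\{n \in [2^{N-1}, 2^N) : W(n) < 0\} \;=\; 2^N \Bigl(\tfrac{1}{4} - \tfrac{\mu_0}{2\sigma\sqrt{2\pi}}\, N^{-1/2} + O(N^{-3/2})\Bigr),
\]
which is (\ref{sturd}) with $c = \mu_0/(2\sigma\sqrt{2\pi})$. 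The main obstacle is the exact evaluation of $\mu_0$ and $\sigma^2$: both must be extracted from the spectral perturbation of $M(z)$ (together with the corresponding right Perron eigenvector) and from the boundary vectors, and they must combine to give $\mu_0/\sigma = 7\sqrt{3}/6$ so that $c$ reduces to $7\sqrt{6}/(24\sqrt\pi)$. Establishing the $O(N^{-3/2})$ remainder also requires controlling the subdominant spectrum of $M(z)$ uniformly on $|z| = 1$, which is the standard but delicate ingredient of an Edgeworth-type bound.
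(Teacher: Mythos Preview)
Your approach is sound in outline but takes a genuinely different route from the paper. The paper proceeds entirely through formal-language machinery: it builds an unambiguous PDA recognizing the reversed binary expansions of $3$-flimsy numbers, converts it via the triple construction to an unambiguous CFG, translates that grammar into a polynomial system for the census generating function $S(x)$, eliminates variables with a Gr\"obner basis to obtain an explicit quadratic in $S(x)$, and finally applies Flajolet--Sedgewick singularity analysis (via Salvy's \texttt{algolib}) to read off the coefficient asymptotics. Every step is computer-assisted, and the method carries over mechanically to other multipliers $k$ and bases. Your transfer-matrix/quasi-power argument is essentially the probabilistic route of K\'atai, Schmidt, and Schmid, which the paper cites precisely as the contrast to its own method: it is more transparent conceptually---one sees at once why the answer is ``one half minus a Gaussian tail''---but the constants have to be extracted by hand from a spectral perturbation calculation rather than produced automatically.

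Two places where your sketch needs tightening. First, the formula $c=\mu_0/(2\sigma\sqrt{2\pi})$ omits the lattice (Euler--Maclaurin) correction: summing the Gaussian local density over the integers $k\le -1$ gives $\tfrac12-(\mu_0+\tfrac12)/(\sigma\sqrt{2\pi N})$ to leading order, so in fact $c=(\mu_0+\tfrac12)/(2\sigma\sqrt{2\pi})$. With the readily computed $\sigma^2=1/3$ (from the explicit Perron root $\Lambda(z)=\tfrac12\bigl(1+\sqrt{1+4(z+z^{-1})}\bigr)$ of your $3$-state collapsed chain) one then needs $\mu_0=2/3$, not $\mu_0/\sigma=7\sqrt3/6$, to recover $c=7\sqrt6/(24\sqrt\pi)$. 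Second, getting the remainder down to $O(N^{-3/2})$ rather than $O(N^{-1})$ requires that both the $N^{-1/2}$ and the $N^{-1}$ Edgeworth corrections vanish after summation. The first holds because the carry automaton has the symmetry $(0,0)\leftrightarrow(1,1)$, input $0\leftrightarrow 1$, weight $\mapsto -\text{weight}$, which forces $\kappa_3=0$; the second holds because the resulting $\kappa_4 H_4$ term is even and integrates to zero over a half-line. Neither point is fatal, but both must be made explicit to turn your outline into a proof.
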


Our method starts with a pushdown automaton (PDA) recognizing the $k$-flimsy numbers, and by a series of steps, it is converted into an 
asymptotic series expansion for the number of $k$-flimsy numbers with $N$ bits.   Previously, the basic approach has been used for a wide variety of combinatorial enumerations; see, for example,
\cite{Banderier&Drmota:2013,Banderier&Drmota:2015,Asinowski&Bacher&Banderier&Gittenberger:2018,Asinowski&Bacher&Banderier&Gittenberger:2019}.
We have implemented all the steps, and the flow of control is explained in the diagram below.
\begin{center}
    \includegraphics{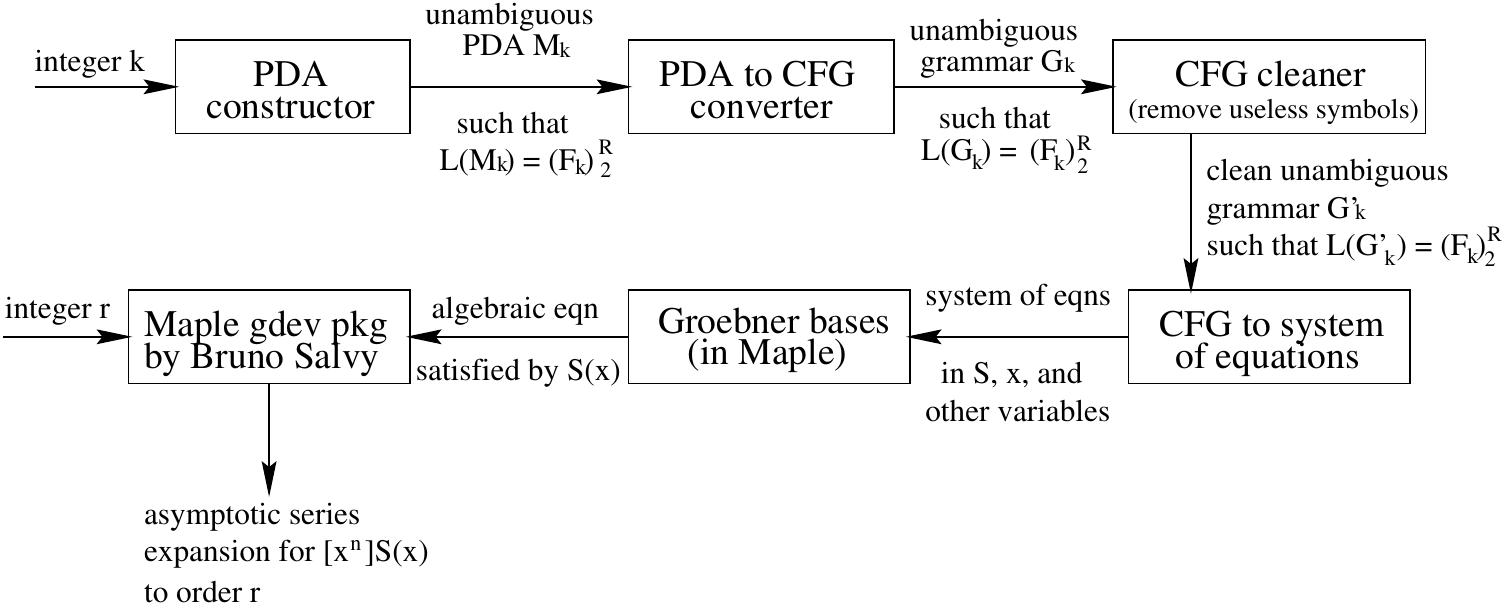}
\end{center}
We now explain briefly what each box in the diagram does, with more detailed explanation to follow.  For all undefined terms, see any textbook on automata theory or formal languages, such as \cite{Hopcroft&Ullman:1979}.

First, given an odd integer $k \geq 3$, we build an unambiguous pushdown automaton (PDA) $M_k$
that recognizes the base-$2$ representation of elements of
$F_k$; more precisely, $M_k$ recognizes the language $(F_k)_2^R$.  The length-$N$ strings in $(F_k)_2^R$ are in 1-1 correspondence with the flimsy numbers in the half-open interval
$[2^{N-1}, 2^N)$, so our goal is to estimate the cardinality of
$(F_k)_2^R \cap  \{ 0, 1\}^{N}$ as precisely as possible.

Second, we convert $M_k$ to an unambiguous context-free grammar $G_k$
generating $(F_k)_2^R$.

We simplify this context-free grammar by deleting useless symbols (those symbols that do not participate in the derivation of any terminal string, or are not reachable from the start variable), obtaining a new CFG $G'_k$.

Third, we convert $G'_k$ to a system of equations in the variables of $G'_k$.  These variables represent formal power series, with the property that the number of length-$N$ strings generated by a variable $A$ is given
by $[x^N] A(x)$, the coefficient of $x^N$ in the power series $A$.

Fourth, using Gr\"obner bases, we solve this system of equations, obtaining an algebraic equation satisfied by the formal power series $S(x)$, where $S$ is the start variable of the grammar $G'_k$.

Finally, using Bruno Salvy's {\tt gdev} package, written in Maple, we can determine the asymptotic behavior of $[x^N]S(x)$ using the saddle-point method (as discussed by, e.g., Flajolet and Sedgewick \cite{Flajolet&Sedgewick:2009}).  In principle, we can obtain as many terms as we wish of the asymptotic expansion.  

Theorem~\ref{clok} now follows by performing each of these steps.  The first four steps are done with original code written by the first author in Python, and the last two steps are done with Maple.  The code for each step is available at \\
\centerline{{\tt  https://github.com/FinnLidbetter/sturdy-numbers } \ \ .}
We now give more complete details of some of the steps.

\subsection{Constructing the PDA $M_k$}

The general idea is as follows:  we create a PDA accepting
the base-2 representation of $k$-flimsy numbers $n$.  We use the
stack of the PDA to record the absolute value of
$s_2 (n) - s_2 (kn)$, and we use the state to record
both the carry needed when multiplying input by $k$,
and the sign of $s_2 (n) - s_2 (kn)$. We accept the
input if the carry is 0, the sign of $s_2 (n) - s_2 (kn)$
is positive, and the stack has at least one counter.



Our PDA is assumed to begin its computation with a special symbol, {\tt Z}, on
top of the stack, and if the input is accepted, to end its computation when the stack becomes empty.

The sketch above is not quite enough because of two technical
issues.  First, (a) in some cases this approach
requires reading
extra leading zeroes (which, because we are representing
numbers starting with the least significant digit first,
would be at the end of the input), in order to guarantee
that the carry for $s_2(kn)$ was taken into account and
(b) we must have that the leading bit of the input is 1, to avoid
incorrectly counting smaller numbers as having $n$ bits.

To handle both these issues, we slightly modify the construction
in several ways.
First, if the state has a minus sign, then the stack
holds $|y|_1 - |x|_1$ {\tt X}'s, where $x$ is the input seen so far and
$y$ is the $|x|$ least significant bits of $k (x)^R_2$.
On the other hand, if the state has a positive sign, then
the stack holds $|x|_1 - |y|_1 - 1$ {\tt X}'s.

Second, to simulate the needed leading zeroes required to handle
the carry, without actually reading them, we use a special series of
$\log_2 k$ states to pop {\tt X}'s from the stack.

Finally, we have a special state used to empty the stack when acceptance is detected.  The total number of states is therefore at most $2k + \log_2 k$.

The resulting PDA $M_3$ is depicted
in Figure~\ref{fig1}.
\begin{figure}[H]
\begin{center}
\includegraphics[width=3in]{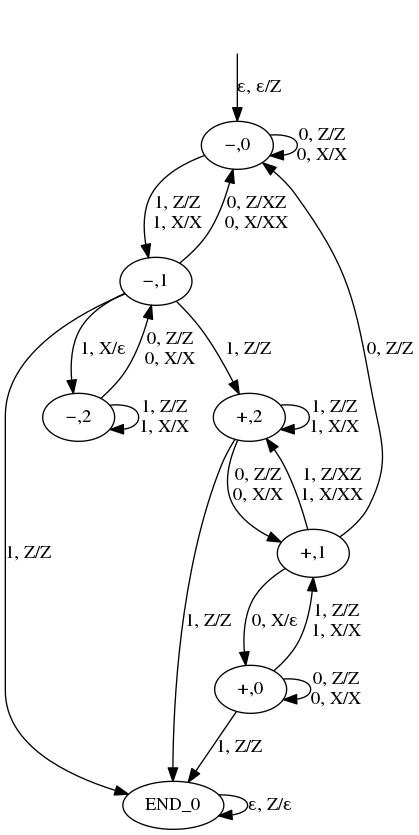}
\end{center}
\label{fig1}
\caption{PDA $M_3$.}
\end{figure}

One important property of our construction is that our PDA $M_k$ is {\it unambiguous}.   By this we mean that every accepted word has exactly one accepting computational path.

\subsection{Converting the PDA to a CFG}

We can convert $M_k$ to an equivalent context-free grammar $G_k$ using a standard technique
called the ``triple construction'' \cite[pp.~115--119]{Hopcroft&Ullman:1979}.   This gives us a grammar $G_k$ with $O(k^2)$ variables and $O(k^3)$ productions.


Now we use the fact, proved in
\cite[Thm.~5.4.3, p.~151]{Harrison:1978}, that performing the triple construction on an unambiguous PDA gives us an unambiguous grammar.  


\subsection{Cleaning the CFG}

We can remove useless symbols from our grammar $G_k$ by removing all variables that
do not derive a terminal string, then removing all productions containing these removed
variables, and then removing all variables and terminals that are not reachable from
the start variable.
This is a standard procedure, and is described in greater detail in \cite[pp.~88--90]{Hopcroft&Ullman:1979}.

Once this is complete, it may be found that there is a variable $X$ that has only one production $X \rightarrow \alpha$. If $X$ is not the start variable, then it can be deleted from the set of variables, and all instances of $X$ in production rules can be replaced with $\alpha$.

For example, when we convert our PDA $M_3$, we get an unambiguous grammar $G_3$; cleaning $G_3$ using this procedure gives us the following grammar $G'_3$:
\begin{align*}
S   & \rightarrow    1F  \orr  0S  & \quad
A   & \rightarrow    1E  \orr  0A \\
B   & \rightarrow    1G  \orr  0B & \quad 
C   & \rightarrow    1H  \orr  1  \orr  0C \\
D   & \rightarrow    1I  \orr  0D & \quad
E   & \rightarrow    1   \orr  0AJ \\
F   & \rightarrow    1N  \orr  0AK & \quad
G   & \rightarrow    1LB \orr  0 \\
H   & \rightarrow    1M  \orr  1LC  \orr  1 & \quad
I   & \rightarrow    1M  \orr  1LD  \orr  1  \orr  0S \\
J   & \rightarrow    1J  \orr  0E & \quad
K   & \rightarrow    1K  \orr  0F \\
L   & \rightarrow    1L  \orr  0G & \quad
M   & \rightarrow    1M  \orr  1  \orr  0H \\
N   & \rightarrow    1N  \orr  0I
\end{align*}

\subsection{Converting the CFG to a system of equations}

This transformation was discussed in \cite{Chomsky&Schutzenberger:1963}.   It suffices
to replace, in each set of productions $A \rightarrow \alpha_1 \ | \ \alpha_2 \ | \ \ldots \ | \ \alpha_i$
of a grammar $G$, each terminal symbol by the indeterminate $x$, each $|$ symbol by a
plus sign, and the $\rightarrow$ with an equals sign.
For a proof of correctness, see \cite{Kuich&Salomaa:1986,Panholzer:2005}.

Performing this transformation on $G'_3$ gives us the following system of equations:
\begin{align*}
S   & =    xF  +  xS  & \quad 
A   & =    xE  +  xA \\
B   & =    xG  +  xB & \quad 
C   & =    xH  +  x  +  xC \\
D   & =    xI  +  xD & \quad 
E   & =    x   +  xAJ \\
F   & =    xN  +  xAK & \quad
G   & =    xLB +  x \\
H   & =    xM  +  xLC  +  x & \quad
I   & =    xM  +  xLD  +  x  +  xS \\
J   & =    xJ  +  xE & \quad
K   & =    xK  +  xF \\
L   & =    xL  +  xG & \quad
M   & =    xM  +  x  +  xH \\
N   & =    xN  +  xI
\end{align*}

\subsection{Solving the system}
\label{maple_groebner}

We can now solve the resulting system of equations for $S$,
obtaining an algebraic equation for which $S$ is the root.
The main tool is Groebner bases, for which a helpful package already exists in {\tt Maple}.

Using the code given in Appendix~\ref{mapleapp}, we find the following quadratic equation for $S$ in the case $k = 3$.
$$ x (2 x - 1)^2 (x + 1) (2 x^2 - x + 1) S(x)^2   +   (2 x - 1) (x - 1)^2 (x + 1) (2 x^2 - x + 1) S(x)   +   x^4 (x^2 - x + 1) = 0.$$

Solving this quadratic for $S$ gives
$$S(x) = {{ -(x-1)^2(x+1)(2x^2-x+1) + \sqrt{ -(x-1)(2x-1)(2x^2-x+1)(x^3+x^2-x+1)^2 }}  \over { 2x(2x-1)(x+1)(2x^2-x+1) }}.$$
Since the grammar $G'_3$ is unambiguous, the formal power series $S(x)$ is the census generating function for the set $(F_3)^R_2$. In particular, this means that $[x^N]S(x) = |F_3 \cap [2^{N-1},2^N)|$, or in other words, the coefficient of $x^N$ in $S(x)$ is the number $k$-flimsy numbers in $[2^{N-1},2^N)$.

\subsection{Asymptotic expansion of the coefficients of the power series}
\label{maple_asymp}

Finally, we use Flajolet-Sedgewick-style asymptotic analysis
\cite[\S VII. 7.1]{Flajolet&Sedgewick:2009} to determine an asymptotic formula for the $N$'th coefficient of the power series expansion for $S(x)$.  Conveniently, there is a {\tt Maple} package {\tt algolib}, written by Bruno Salvy \cite{Salvy:2013}, to accomplish this.   When we run this on our formula for $S(x)$, we get our desired result.

This completes our discussion of the proof of Theorem~\ref{clok}.

\begin{remark}
We could easily determine more terms in the asymptotic expansion, if we wanted, using the same ideas.  For example, we can find that the number of $3$-flimsy numbers in the interval $[2^{N-1}, 2^N)$ is
$$ 2^N \left({1 \over 4} - 
{{\sqrt{6}} \over \sqrt{\pi}} \left( {7 \over {24}}  N^{-1/2} + {{13} \over {72}} N^{-3/2}  - {{17} \over {64}} N^{-5/2} + 
{{3365} \over {13824}} N^{-7/2}
+ \cdots  \right) \right).$$
\end{remark}

\begin{corollary} 
The number of $3$-flimsy
numbers $< 2^N$ is 
$2^{N-1}  - O(2^N N^{-1/2})$.
\end{corollary}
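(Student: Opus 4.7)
The plan is to obtain this corollary by summing the estimate from Theorem~\ref{clok} across all dyadic intervals contained in $[1,2^N)$.

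First I would note that every positive integer less than $2^N$ lies in exactly one interval of the form $[2^{m-1},2^m)$ with $1 \leq m \leq N$, so
$$|F_3 \cap [1,2^N)| \;=\; \sum_{m=1}^{N} |F_3 \cap [2^{m-1},2^m)|.$$
Applying Theorem~\ref{clok} to each summand (with the $m=1$ and other small terms absorbed into constants) gives
$$\sum_{m=1}^{N} 2^m\!\left(\tfrac{1}{4} - c\, m^{-1/2} + O(m^{-3/2})\right) \;=\; \tfrac{1}{4}\sum_{m=1}^{N} 2^m \;-\; c\sum_{m=1}^{N} 2^m m^{-1/2} \;+\; O\!\left(\sum_{m=1}^{N} 2^m m^{-3/2}\right).$$
The geometric series on the left evaluates cleanly to $(2^{N+1}-2)/4 = 2^{N-1} - 1/2$, which supplies the advertised main term.

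The key technical step is then to bound the remaining two tail sums. I claim that for any fixed $\alpha>0$ one has
$$\sum_{m=1}^{N} 2^m m^{-\alpha} \;=\; O\!\bigl(2^N N^{-\alpha}\bigr).$$
This follows by comparing consecutive terms: the ratio $2^{m+1}(m+1)^{-\alpha}/(2^m m^{-\alpha}) = 2(m/(m+1))^{\alpha}$ tends to $2$, so the series is dominated by a geometric series with ratio close to $2$, whose sum is within a bounded multiple of its final term $2^N N^{-\alpha}$. Applying this with $\alpha = 1/2$ and $\alpha = 3/2$ converts the two correction terms above into $O(2^N N^{-1/2})$ and $O(2^N N^{-3/2})$ respectively, the latter being absorbed into the former.

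The only mild obstacle is formalizing the tail-sum estimate, but it is a routine geometric-series comparison. Combining everything, $|F_3 \cap [1,2^N)| = 2^{N-1} + O(2^N N^{-1/2})$, and since the leading correction is negative (the coefficient $c$ in Theorem~\ref{clok} is positive), this may be written as $2^{N-1} - O(2^N N^{-1/2})$, completing the proof.
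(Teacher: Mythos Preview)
Your proof is correct and follows essentially the same route as the paper: decompose $[1,2^N)$ into dyadic intervals, apply Theorem~\ref{clok} to each, and show that $\sum_{m=1}^N 2^m m^{-\alpha} = O(2^N N^{-\alpha})$ for the error terms. The only cosmetic difference is that the paper bounds the tail sum by splitting it at $m = N/2$, whereas you use a ratio-test comparison with a geometric series; both arguments are standard and equally valid.
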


\begin{proof}
For any real number $a > 0$ we have
\begin{align*}
2^N N^{-a} \leq \sum_{1 \leq n \leq N} 2^n n^{-a} & \leq \sum_{1 \leq n \leq N/2} 2^n n^{-a} + \sum_{N/2 < n \leq N} 2^n n^{-a} \\
& \leq \sum_{1 \leq n \leq N/2} 2^n + (N/2)^{-a} \sum_{N/2 < n \leq N} 2^n \\
& \leq 2^{N/2+1} + (N/2)^{-a} 2^{N+1} .
\end{align*}
Summing \eqref{sturd} and applying the inequalities above
gives the desired result.
\end{proof}

\begin{theorem}
The number of 5-flimsy numbers in the interval $[2^{N-1}, 2^N)$ is 
\begin{equation}
2^N \left({1 \over 4} - cN^{-1/2} + O(N^{-3/2}) \right),
\label{5sturd}
\end{equation}
where $c = {{3\sqrt{5}} \over {8 \sqrt{\pi}}} \doteq 0.473087348$.
\end{theorem}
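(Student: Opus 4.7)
The plan is to carry out, for $k = 5$, exactly the same pipeline used for $k = 3$ in Theorem~\ref{clok}. First, I would construct an unambiguous pushdown automaton $M_5$ whose accepted language is $(F_5)_2^R$. As in the $k=3$ case, the stack of $M_5$ records $|s_2(n) - s_2(5n)|$ via a single counter symbol \texttt{X} (together with the bottom-of-stack marker \texttt{Z}), while the state records the carry propagated by multiplication by $5$ (which lies in $\{0,1,2,3,4\}$) and the sign of $s_2(n) - s_2(5n)$. A short chain of $\lceil \log_2 5 \rceil = 3$ ``flush'' states at the end drains the simulated leading zeros of $5n$, and a final draining state empties the stack upon acceptance. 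This gives a PDA with at most $2\cdot 5 + 3 = 13$ states. Unambiguity is guaranteed because the transition taken is determined at each step by the current carry state and the input bit, and the stack operation is forced by the resulting sign of $s_2(n) - s_2(5n)$.

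Next, I would apply the triple construction of \cite[pp.~115--119]{Hopcroft&Ullman:1979} to $M_5$, producing an unambiguous context-free grammar $G_5$ with $O(k^2) = O(25)$ variables and $O(k^3) = O(125)$ productions, and then remove useless symbols and collapse single-production chains to obtain a cleaned grammar $G_5'$. Applying the Chomsky--Sch\"utzenberger transformation \cite{Chomsky&Schutzenberger:1963} to $G_5'$ yields a system of polynomial equations in the variables of $G_5'$, with the start variable $S(x)$ being the census generating function for $(F_5)_2^R$; by unambiguity, $[x^N]S(x)$ is exactly the number of $5$-flimsy numbers in $[2^{N-1}, 2^N)$. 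Using the \texttt{Groebner} package in Maple, as in Section~\ref{maple_groebner}, I would eliminate all auxiliary variables and obtain a single algebraic equation satisfied by $S(x)$.

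Finally, I would feed this algebraic equation into Salvy's \texttt{gdev}/\texttt{algolib} machinery, which applies the Flajolet--Sedgewick saddle-point method \cite[\S VII.7.1]{Flajolet&Sedgewick:2009} to algebraic generating functions and returns an asymptotic expansion of $[x^N]S(x)$. The dominant singularity $\rho$ of $S$ should satisfy $\rho = 1/2$ (since almost all odd multiples appear and the density of $F_5$ is $1/2$), contributing the $2^N/4$ main term (half of the $2^{N-1}$ integers in $[2^{N-1}, 2^N)$). The square-root branch of the algebraic curve then produces the correction of order $N^{-1/2}$, and the constant $c = 3\sqrt{5}/(8\sqrt{\pi})$ is read off from the local Puiseux expansion of $S(x)$ at $x = 1/2$.

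The main obstacle is computational rather than conceptual: the Gr\"obner basis elimination for a system with roughly $25$ variables over $\mathbb{Q}(x)$ can be significantly heavier than in the $k=3$ case, and the resulting minimal polynomial for $S(x)$ may have much larger degree and coefficients. Care will be needed (e.g.\ choosing an elimination order that eliminates dead-end variables first, or substituting back in stages) to keep the computation tractable. Once the minimal polynomial is in hand, reading off the constant $c$ via the saddle-point analysis is routine: it amounts to computing the leading coefficient of the singular expansion $S(x) = S(1/2) - c' \sqrt{1 - 2x} + O(1-2x)$ and converting via $[x^N]\sqrt{1-2x} \sim -2^N/(2\sqrt{\pi}\, N^{3/2})$ into the $2^N N^{-1/2}$ term. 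Verifying numerically that the extracted constant agrees with $3\sqrt{5}/(8\sqrt{\pi}) \doteq 0.473087348$ against exact counts of $5$-flimsy numbers for moderate $N$ serves as a sanity check.
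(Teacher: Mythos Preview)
Your proposal follows exactly the paper's approach: the paper's proof literally says ``This is determined using the same method as the proof for Theorem~\ref{clok}'' and then displays the PDA $M_5$, the cleaned grammar $G'_5$ (with $40$ variables $S,V_1,\ldots,V_{39}$), and the resulting system of equations before invoking the same Gr\"obner and \texttt{algolib} machinery.

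One small slip in your description of the last step is worth correcting. The singular expansion cannot have the form $S(x) = S(1/2) - c'\sqrt{1-2x} + O(1-2x)$: that would make $S(1/2)$ finite and force $[x^N]S(x) = O(2^N N^{-3/2})$, losing the main term $2^N/4$ entirely (and indeed $[x^N]\sqrt{1-2x}$ is of order $2^N N^{-3/2}$, not $2^N N^{-1/2}$, as your own formula shows). The correct local behaviour at $x=1/2$ is a simple pole plus an inverse square root,
\[
S(x) \sim \tfrac{1}{4}(1-2x)^{-1} - c\sqrt{\pi}\,(1-2x)^{-1/2} + \cdots,
\]
so that the pole contributes $2^N/4$ and the $(1-2x)^{-1/2}$ term, via $[x^N](1-2x)^{-1/2}\sim 2^N/\sqrt{\pi N}$, contributes the $-c\,2^N N^{-1/2}$ correction. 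This does not affect the validity of your plan, since \texttt{algolib} handles the singular analysis automatically, but it is worth stating correctly if you ever present the asymptotics by hand.
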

\begin{proof}
This is determined using the same method as the proof for Theorem~\ref{clok}.
In this process, we compute the unambiguous PDA $M_5$, depicted in Figure~\ref{fig2}.
\begin{figure}[H]
\begin{center}
\includegraphics[width=4in]{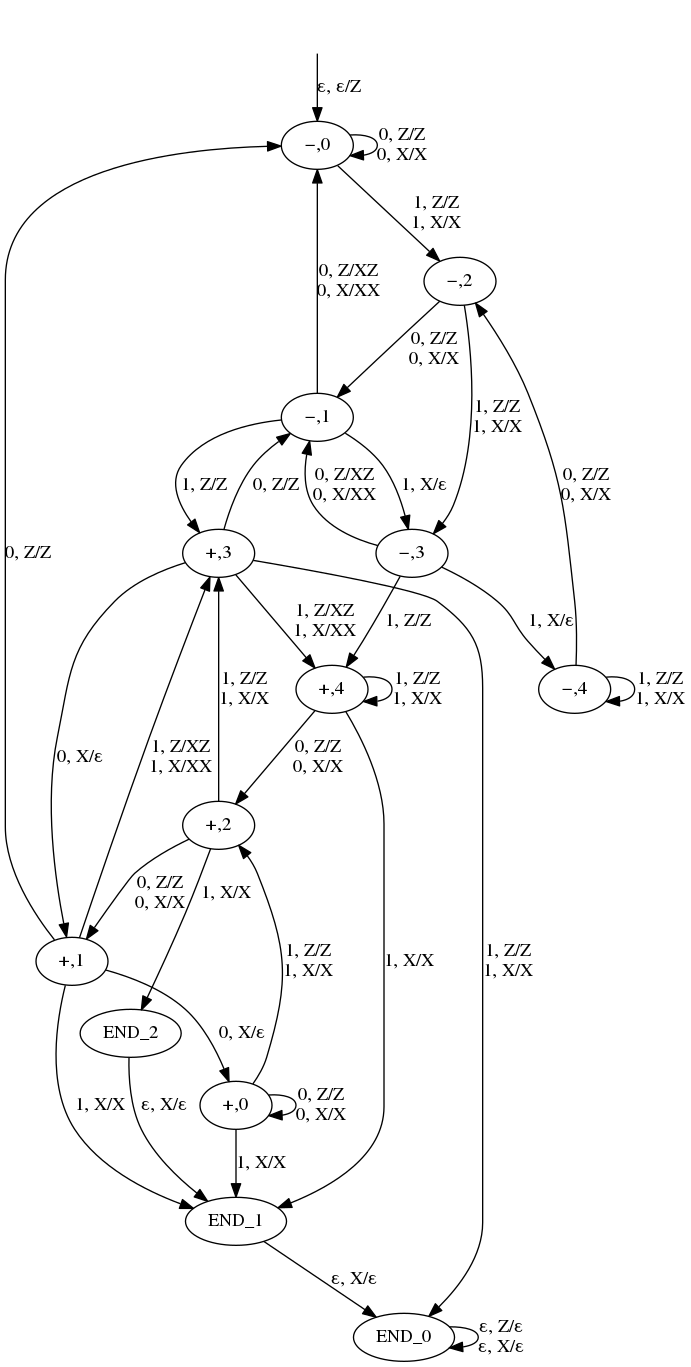}
\end{center}
\label{fig2}
\caption{PDA $M_5$.}
\end{figure}

From $M_5$ we construct the (simplified) unambiguous grammar $G'_5$ as follows:
\begin{align*}
    S	    &\rightarrow	1 V_1 \orr 0 S & \quad
    V_1	    &\rightarrow	1 V_{29} \orr 0 V_{27}\\
    V_2	    &\rightarrow	1 V_{36} \orr 1 \orr 0 V_2 & \quad
    V_3	    &\rightarrow	1 V_3 \orr 0 V_{12}\\
    V_4	    &\rightarrow	1 V_4 \orr 1 \orr 0 V_{36} & \quad
    V_5	    &\rightarrow	1 V_{10} V_{37} \orr 1 V_{20} V_2 \orr 1 V_{24} \orr 1 V_4 \orr 1\\
    V_6	    &\rightarrow	1 V_{28} \orr 0 V_6 & \quad
    V_7	    &\rightarrow	1 V_{10} V_{21} \orr 1 V_{20} V_{11} \orr 0\\
    V_8	    &\rightarrow	1 V_7 V_8 \orr 1 V_{22} V_{25} & \quad
    V_9	    &\rightarrow	1 V_7 V_9 \orr 1 V_{22} V_{19} \orr 1 V_5 \orr 0 S\\
    V_{10}	&\rightarrow	1 V_{10} \orr 0 V_{23} & \quad
    V_{11}	&\rightarrow	1 V_{23} \orr 0 V_{11}\\
    V_{12}	&\rightarrow	1 V_{14} \orr 0 V_9 & \quad
    V_{13}	&\rightarrow	1 V_{13} \orr 0 V_{39}\\
    V_{14}	&\rightarrow	1 V_{10} V_9 \orr 1 V_{20} V_{19} \orr 1 V_4 \orr 1 \orr 0 V_{27} & \quad
    V_{15}	&\rightarrow	1 V_7 V_{15} \orr 1 V_{22} V_{26} \orr 0\\
    V_{16}	&\rightarrow	1 V_{10} V_8 \orr 1 V_{20} V_{25} & \quad
    V_{17}	&\rightarrow	0 V_6 V_{33} \orr 0 V_{18} V_{34}\\
    V_{18}	&\rightarrow	1 V_{39} \orr 0 V_{18} & \quad
    V_{19}	&\rightarrow	1 V_{12} \orr 0 V_{19}\\
    V_{20}	&\rightarrow	1 V_{20} \orr 0 V_{31} & \quad
    V_{21}	&\rightarrow	1 V_7 V_{21} \orr 1 V_{22} V_{11}\\
    V_{22}	&\rightarrow	1 V_{10} V_{15} \orr 1 V_{20} V_{26} & \quad
    V_{23}	&\rightarrow	1 V_7 \orr 0 V_{21}\\
    V_{24}	&\rightarrow	1 V_{24} \orr 0 V_{35} & \quad
    V_{25}	&\rightarrow	1 V_{35} \orr 0 V_{25}\\
    V_{26}	&\rightarrow	1 V_{31} \orr 0 V_{26} & \quad
    V_{27}	&\rightarrow	1 V_{14} \orr 0 V_6 V_{30} \orr 0 V_{18} V_{29}\\
    V_{28}	&\rightarrow	1 V_{34} \orr 0 V_{17} & \quad
    V_{29}	&\rightarrow	1 V_3 \orr 0 V_{17} V_{30} \orr 0 V_{32} V_{29}\\
    V_{30}	&\rightarrow	1 V_{30} \orr 0 V_1 & \quad
    V_{31}	&\rightarrow	1 V_{22} \orr 0 V_{15}\\
    V_{32}	&\rightarrow	1 \orr 0 V_6 V_{13} \orr 0 V_{18} V_{38} & \quad
    V_{33}	&\rightarrow	1 V_{33} \orr 0 V_{28}\\
    V_{34}	&\rightarrow	1 \orr 0 V_{17} V_{33} \orr 0 V_{32} V_{34} & \quad
    V_{35}	&\rightarrow	1 V_{16} \orr 1 \orr 0 V_8\\
    V_{36}	&\rightarrow	1 V_5 \orr 0 V_{37} & \quad
    V_{37}	&\rightarrow	1 V_7 V_{37} \orr 1 V_{22} V_2 \orr 1 V_{16} \orr 1 V_5 \orr 1\\
    V_{38}	&\rightarrow	0 V_{17} V_{13} \orr 0 V_{32} V_{38} & \quad
    V_{39}	&\rightarrow	1 V_{38} \orr 0 V_{32}
\end{align*}

This gives us the system of equations:
\begin{align*}
    S	     &=    x  V_1 + x S &\quad
    V_1	     &=    x  V_{29} + x V_{27}\\
    V_2	     &=    x  V_{36} + x + x V_2 &\quad
    V_3	     &=    x  V_3 + x V_{12}\\
    V_4	     &=    x  V_4 + x + x V_{36} &\quad
    V_5	     &=    x  V_{10} V_{37} + x V_{20} V_2 + x V_{24} + x V_4 + x\\
    V_6	     &=    x  V_{28} + x V_6 &\quad
    V_7	     &=    x  V_{10} V_{21} + x V_{20} V_{11} + x\\
    V_8	     &=    x  V_7 V_8 + x V_{22} V_{25} &\quad
    V_9	     &=    x  V_7 V_9 + x V_{22} V_{19} + x V_5 + x S\\
    V_{10}	 &=    x  V_{10} + x V_{23} &\quad
    V_{11}	 &=    x  V_{23} + x V_{11}\\
    V_{12}	 &=    x  V_{14} + x V_9 &\quad
    V_{13}	 &=    x  V_{13} + x V_{39}\\
    V_{14}	 &=    x  V_{10} V_9 + x V_{20} V_{19} + x V_4 + x + x V_{27} &\quad
    V_{15}	 &=    x  V_7 V_{15} + x V_{22} V_{26} + x\\
    V_{16}	 &=    x  V_{10} V_8 + x V_{20} V_{25} &\quad
    V_{17}	 &=    x  V_6 V_{33} + x V_{18} V_{34}\\
    V_{18}	 &=    x  V_{39} + x V_{18} &\quad
    V_{19}	 &=    x  V_{12} + x V_{19}\\
    V_{20}	 &=    x  V_{20} + x V_{31} &\quad
    V_{21}	 &=    x  V_7 V_{21} + x V_{22} V_{11}\\
    V_{22}	 &=    x  V_{10} V_{15} + x V_{20} V_{26} &\quad
    V_{23}	 &=    x  V_7 + x V_{21}\\
    V_{24}	 &=    x  V_{24} + x V_{35} &\quad
    V_{25}	 &=    x  V_{35} + x V_{25}\\
    V_{26}	 &=    x  V_{31} + x V_{26} &\quad
    V_{27}	 &=    x  V_{14} + x V_6 V_{30} + x V_{18} V_{29}\\
    V_{28}	 &=    x  V_{34} + x V_{17} &\quad
    V_{29}	 &=    x  V_3 + x V_{17} V_{30} + x V_{32} V_{29}\\
    V_{30}	 &=    x  V_{30} + x V_1 &\quad
    V_{31}	 &=    x  V_{22} + x V_{15}\\
    V_{32}	 &=    x  + x V_6 V_{13} + x V_{18} V_{38} &\quad
    V_{33}	 &=    x  V_{33} + x V_{28}\\
    V_{34}	 &=    x  + x V_{17} V_{33} + x V_{32} V_{34} &\quad
    V_{35}	 &=    x  V_{16} + x + x V_8\\
    V_{36}	 &=    x  V_5 + x V_{37} &\quad
    V_{37}	 &=    x  V_7 V_{37} + x V_{22} V_2 + x V_{16} + x V_5 + x\\
    V_{38}	 &=    x  V_{17} V_{13} + x V_{32} V_{38} &\quad
    V_{39}	 &=    x  V_{38} + x V_{32}
\end{align*}
\end{proof}

Finally, we employ the same methods and {\tt Maple} packages used in
Sections~\ref{maple_groebner} and~\ref{maple_asymp}, which give us
our desired result.

We can also use the same ideas to compute the distribution of flimsy numbers in other bases.  As an example we proved
\begin{theorem}
The number of integers in the range $[3^{N-1}, 3^N)$ that
are $2$-flimsy in base $3$ is $$
3^N \left({1 \over 3} + {{\sqrt{3}} \over \sqrt{\pi N}} \left( -{1 \over 3} + {1 \over {48 N}} - {{13} \over {1536 N^2}} - {{65} \over {24576 N^3}}  + O(N^{-4}) \right) \right). $$
\end{theorem}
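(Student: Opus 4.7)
The plan is to follow the exact pipeline that produced Theorem~\ref{clok} and its 5-flimsy analogue, but now with the roles of the base and the multiplier interchanged: the input is the base-3 expansion of $n$, and the quantity being compared is $s_3(n)$ versus $s_3(2n)$. First I would construct an unambiguous pushdown automaton $M$ that reads $(n)_3^R$ (least significant digit first) and accepts iff $s_3(2n)<s_3(n)$. The state will carry two pieces of information: the pending carry $c\in\{0,1\}$ produced by multiplying by $2$ in base $3$ (since $2\cdot 2+1=5$ gives carry at most $1$), and the sign of the running difference $s_3(n)-s_3(2n)$. Computing the per-digit change $d-((2d+c)\bmod 3)$ for $(d,c)\in\{0,1,2\}\times\{0,1\}$ gives values in $\{-1,0,+1\}$, so a single stack symbol $\mathtt{X}$ suffices: on each transition we push or pop at most one $\mathtt{X}$. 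As in the binary constructions, when the stack is empty and the sign needs to flip, the sign bit in the state switches; acceptance requires the stack strictly non-empty and the sign positive.

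Two finishing touches are needed, mirroring the modifications described for $M_k$. First, I would add a short chain of $\varepsilon$- (or zero-reading) states that flushes the final carry $c=1$ at end of input, so that the high-order digit of $2n$ is accounted for in $s_3(2n)$. Second, I would force the leading (last read) digit to be $1$ or $2$ by using a distinguished pre-accept state that can only be entered on a non-zero symbol; this is what makes the enumeration count integers in $[3^{N-1},3^N)$ rather than all strings of length $N$. The total state count will be a small multiple of the number of (carry, sign, phase) triples and the PDA will be unambiguous by construction, because the computation is fully determined by the digit sequence together with the sign of the running difference.

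Given $M$, the remaining steps are mechanical and parallel the 3-flimsy and 5-flimsy treatments. I would apply the Chomsky--Sch\"utzenberger triple construction to obtain an unambiguous context-free grammar $G$ generating $(F_2^{(3)})_3^R$, clean $G$ by removing useless variables and collapsing unit-production chains, translate each production to an algebraic equation in a common indeterminate $x$ (terminals $\mapsto x$, alternation $\mapsto +$), and then compute a Gr\"obner basis in Maple to eliminate all auxiliary series and obtain a single polynomial equation $P(x,S(x))=0$ satisfied by the census generating function $S(x)$ of the start variable. Finally I would feed $S(x)$ to Bruno Salvy's \texttt{gdev}/\texttt{algolib} package, which implements the Flajolet--Sedgewick saddle-point/singularity analysis, and read off successive terms of the asymptotic expansion of $[x^N]S(x)$. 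Matching the stated leading coefficient $1/3$, together with the $\sqrt{3}/\sqrt{\pi N}$ prefactor and the rational coefficients $-1/3,\ 1/48,\ -13/1536,\ -65/24576$, is then just a matter of requesting enough terms from \texttt{gdev}.

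The principal obstacle is step one: writing down $M$ correctly so that it is genuinely unambiguous and so that the end-of-input carry and the leading-digit constraint are handled without double-counting. Once this PDA is right, every later step is implemented by the same Python and Maple code used to establish Theorem~\ref{clok}, and the algebraic equation for $S(x)$ (which will be of modest degree in $S$ with polynomial coefficients in $x$) together with the dominant singularity analysis delivers the claimed expansion automatically. In particular, the prefactor $\sqrt{3}/\sqrt{\pi N}$, rather than the $\sqrt{6}/\sqrt{\pi}$ appearing in the binary $k=3$ case, is exactly the signature one expects when the dominant singularity is a square-root branch point at $x=1/3$ with local exponent $1/2$, consistent with the central limit theorems of K\'atai and Schmidt for digit sums of multiples.
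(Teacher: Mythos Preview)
Your proposal is correct and follows exactly the approach the paper intends; the paper's own proof is simply ``As before. We omit the details,'' i.e., rerun the PDA $\to$ CFG $\to$ algebraic equation $\to$ \texttt{gdev} pipeline with base~$3$ and multiplier~$2$. Your outline in fact supplies more detail than the paper does (the carry bound $c\in\{0,1\}$, the $\{-1,0,+1\}$ per-digit increment, and the nonzero leading-digit constraint over $\{1,2\}$), all of which are correct.
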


\begin{proof}
As before.  We omit the details.
\end{proof}



\section{The $k$-equal numbers via formal language theory}
\label{kequ}

Another quantity of interest is the number of $n$
for which $s_2 (n) = s_2 (kn)$.  
We call such $n$ {\it $k$-equal}.   By generalizing
the approach used in Section~\ref{flim}, we can compute how many integers
$n\in[2^{N-1}, 2^N)$ are $k$-equal.

In particular, we modify PDA $M_k$ by changing the transitions to the {\tt END}
states. Whereas $M_k$ transitions to {\tt END} when reading a $1$ if following that
$1$ with sufficiently many zeros would reach the state $(+,0)$,
instead we want such an input to reach the state $(-,0)$ with no counters
on the stack.  

This modification gives us the following unambiguous PDAs $M'_k$ that recognize the
$k$-equal numbers for $k=3,5$:
\begin{figure}[H]
\begin{center}
\includegraphics[width=3in]{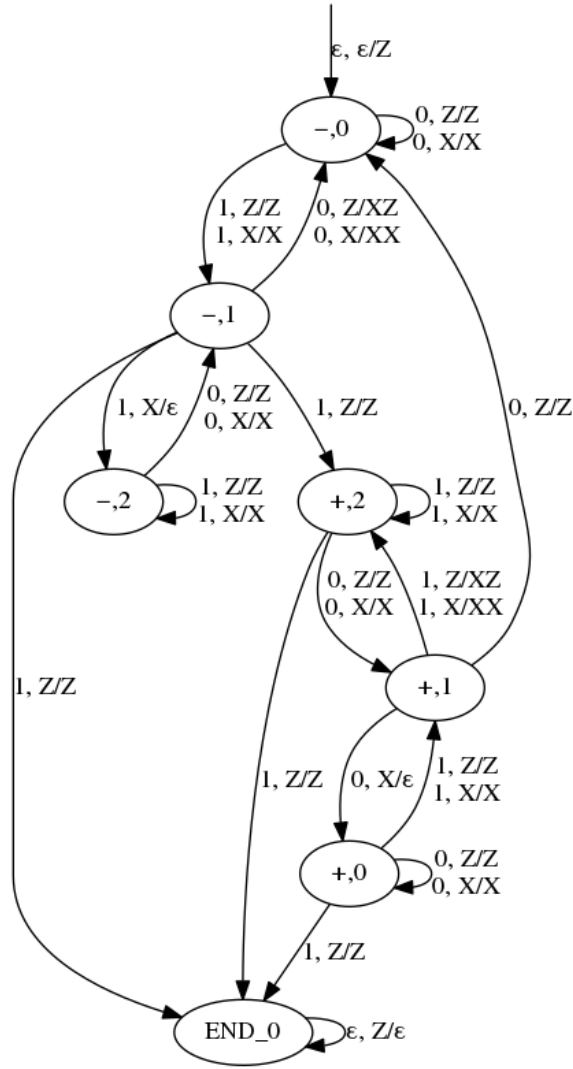}
\end{center}
\label{fig9}
\caption{PDA $M'_3$ recognizing 3-equal numbers.}
\end{figure}

\begin{figure}[H]
\begin{center}
\includegraphics[width=4in]{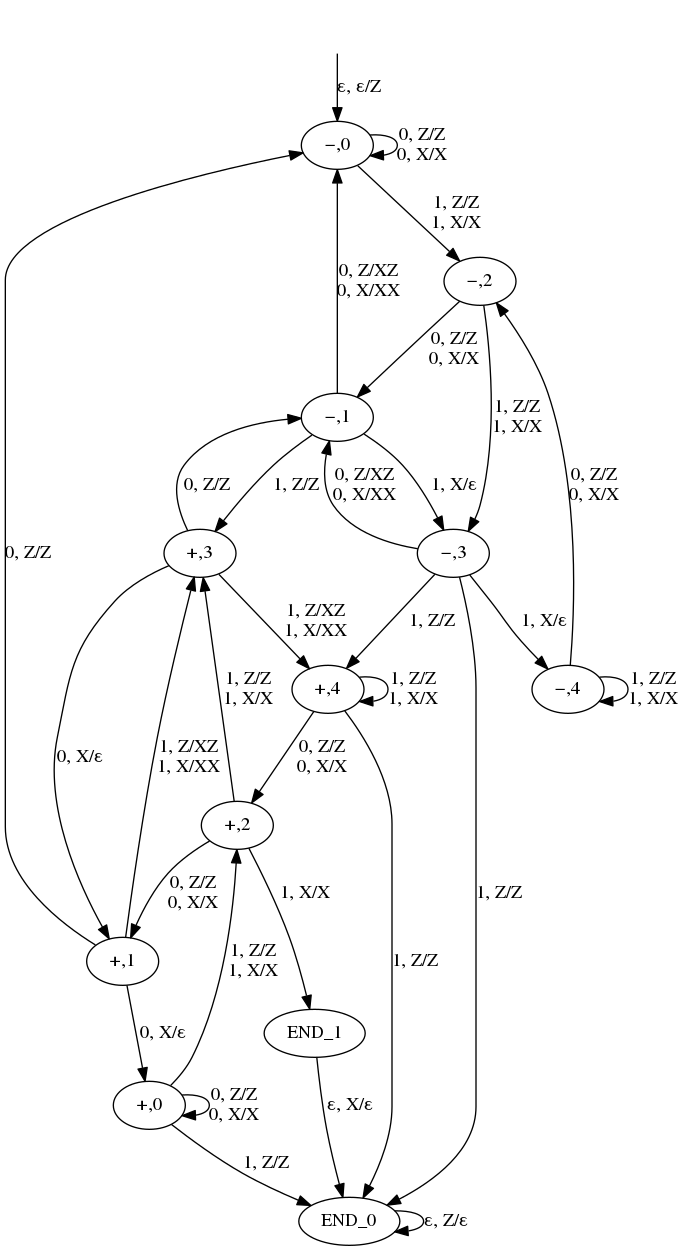}
\end{center}
\label{fig10}
\caption{PDA $M'_5$ recognizing the 5-equal numbers.}
\end{figure}

Following the same procedure described in Theorem~\ref{clok} of converting
$M'_3$ and $M'_5$ into unambiguous CFGs and simplifying them gives us the following:

The unambiguous CFG for 3-equal numbers:
\begin{align*}
S      & \rightarrow	0 S \orr 1 V_6 & \quad
V_1    & \rightarrow	0 V_{11} \orr 1 V_1 \\
V_2    & \rightarrow	0 V_5 V_{10} \orr 1 & \quad
V_3    & \rightarrow	0 V_3 \orr 1 V_{11} \\
V_4    & \rightarrow	0 V_4 \orr 1 V_7 \orr 1 & \quad
V_5    & \rightarrow	0 V_5 \orr 1 V_2 \\
V_6    & \rightarrow	0 V_5 V_9 \orr 1 V_8 \orr 1 & \quad
V_7    & \rightarrow	0 S \orr 1 V_1 V_4 \\
V_8    & \rightarrow	0 V_7 \orr 1 V_8 \orr 1 & \quad
V_9    & \rightarrow	0 V_6 \orr 1 V_9 \\
V_{10} & \rightarrow	0 V_2 \orr 1 V_{10} & \quad
V_{11} & \rightarrow	0 \orr 1 V_1 V_3
\end{align*}

The unambiguous CFG for 5-equal numbers:
\begin{align*}
S	& \rightarrow	0 S \orr 1 V_{29} & \quad
V_1	& \rightarrow	0 V_{32} \orr 1 V_{33} \\
V_2	& \rightarrow	0 V_5 V_8 \orr 0 V_{14} V_2 \orr 1 & \quad
V_3	& \rightarrow	1 V_{33} V_{34} \orr 1 V_{21} V_3 \\
V_4	& \rightarrow	0 V_4 \orr 1 V_{23} & \quad
V_5	& \rightarrow	0 V_{30} V_8 \orr 0 V_4 V_2 \\
V_6	& \rightarrow	1 V_{27} V_{34} \orr 1 V_{28} V_3 & \quad
V_7	& \rightarrow	0 V_7 \orr 1 V_1 \\
V_8	& \rightarrow	0 V_{15} \orr 1 V_8 & \quad
V_9	& \rightarrow	0 V_9 \orr 1 V_{26} \orr 1 \\
V_{10}	& \rightarrow	0 V_{10} \orr 1 V_{19} & \quad
V_{11}	& \rightarrow	0 V_{30} V_{17} \orr 0 V_4 V_{12} \orr 1 V_{22} \\
V_{12}	& \rightarrow	0 V_5 V_{17} \orr 0 V_{14} V_{12} \orr 1 V_{31} \orr 1 & \quad
V_{13}	& \rightarrow	1 V_{33} V_{10} \orr 1 V_{21} V_{13} \\
V_{14}	& \rightarrow	0 V_{30} V_{24} \orr 0 V_4 V_{18} \orr 1 & \quad
V_{15}	& \rightarrow	0 V_5 \orr 1 V_2 \\
V_{16}	& \rightarrow	0 S \orr 1 V_{33} V_9 \orr 1 V_6 \orr 1 V_{21} V_{16} \orr 1 & \quad
V_{17}	& \rightarrow	0 V_{29} \orr 1 V_{17} \\
V_{18}	& \rightarrow	0 V_5 V_{24} \orr 0 V_{14} V_{18} & \quad
V_{19}	& \rightarrow	0 V_{13} \orr 1 V_{21} \\
V_{20}	& \rightarrow	0 V_{25} \orr 1 V_{20} & \quad
V_{21}	& \rightarrow	0 \orr 1 V_{27} V_{10} \orr 1 V_{28} V_{13} \\
V_{22}	& \rightarrow	0 V_{11} \orr 1 V_{27} V_9 \orr 1 V_{20} \orr 1 V_{28} V_{16} & \quad
V_{23}	& \rightarrow	0 V_{14} \orr 1 V_{18} \\
V_{24}	& \rightarrow	0 V_{23} \orr 1 V_{24} & \quad
V_{25}	& \rightarrow	0 V_3 \orr 1 V_6 \orr 1 \\
V_{26}	& \rightarrow	0 V_{16} \orr 1 V_{22} & \quad
V_{27}	& \rightarrow	0 V_1 \orr 1 V_{27} \\
V_{28}	& \rightarrow	0 V_{19} \orr 1 V_{28} & \quad
V_{29}	& \rightarrow	0 V_{11} \orr 1 V_{12} \\
V_{30}	& \rightarrow	0 V_{30} \orr 1 V_{15} & \quad
V_{31}	& \rightarrow	0 V_{26} \orr 1 V_{31} \orr 1 \\
V_{32}	& \rightarrow	0 \orr 1 V_{33} V_7 \orr 1 V_{21} V_{32} & \quad
V_{33}	& \rightarrow	1 V_{27} V_7 \orr 1 V_{28} V_{32} \\
V_{34}	& \rightarrow	0 V_{34} \orr 1 V_{25}
\end{align*}

Finally, converting these grammars into systems of equations, solving, and
finding the asymptotics gives us the following results.

\begin{theorem}
The number of $3$-equal numbers in the interval
$[2^{N-1}, 2^N)$ is
\begin{equation}
2^N \left(cN^{-1/2}  + O(N^{-3/2}) \right),
\label{sturd3}
\end{equation}
where $c = {{\sqrt{6}} \over {4\sqrt{\pi}}} \doteq 0.345494149$.
\label{3equal}
\end{theorem}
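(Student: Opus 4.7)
The plan is to apply the same six-step pipeline used in the proof of Theorem~\ref{clok} (the $3$-flimsy case) and in the proof of the $5$-flimsy theorem, but now starting from the PDA $M'_3$ that has already been exhibited for the $3$-equal numbers. The structural setup is identical: length-$N$ binary strings in the reversed language $(E_3)_2^R$ (where $E_3 := \{n\geq 1 : s_2(3n) = s_2(n)\}$) are in bijection with the $3$-equal integers in $[2^{N-1},2^N)$, so estimating $[x^N]S(x)$, where $S$ is the start-variable generating function of an unambiguous grammar generating $(E_3)_2^R$, gives exactly the count we want.

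First, I would verify that $M'_3$ is unambiguous. The modification from $M_3$ only changed the transitions into the $\mathtt{END}$ state, replacing the condition ``the stack would reach $(+,0)$ after trailing zeros'' by ``the stack reaches $(-,0)$ with zero counters.'' Because the acceptance condition from each configuration is still uniquely determined by the current carry, sign, and stack height, no nondeterminism is introduced; hence every accepted word has exactly one accepting computation. This step is essential because unambiguity is what makes the formal power series $S(x)$ equal to the ordinary census generating function $\sum_N |(E_3)_2^R \cap \{0,1\}^N| x^N$.

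Next, I would run the triple construction on $M'_3$ to obtain an unambiguous CFG, clean it by removing useless and unreachable variables, and collapse single-production variables — this is exactly how the $11$-variable grammar displayed in the paper above was produced. Then, replacing each terminal by $x$, each $\mid$ by $+$, and each $\to$ by $=$ yields a polynomial system in the variables $S, V_1, \ldots, V_{11}$. Using a lexicographic Gröbner basis computation in \texttt{Maple} with $S$ ranked last, I would eliminate the auxiliary variables to obtain a single algebraic equation $P(x,S(x)) = 0$. Finally, I would feed this to Bruno Salvy's \texttt{gdev} package, which performs singularity analysis in the style of Flajolet and Sedgewick to read off the leading asymptotic behavior of $[x^N]S(x)$.

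The main obstacle is ensuring that the singular expansion at the dominant singularity produces precisely the stated leading constant $c = \sqrt{6}/(4\sqrt{\pi})$ with the error term of order $N^{-3/2}$ rather than a larger one. Concretely, the algebraic equation will have a square-root-type branch singularity at $x = 1/2$ (the dominant one because the language is a sublanguage of $\{0,1\}^*$), and the transfer theorem converts a local expansion of the form $S(x) = S(1/2) - A\sqrt{1-2x} + O((1-2x)^{3/2})$ into $[x^N]S(x) = A\, 2^N/(2\sqrt{\pi} N^{3/2}) \cdot (1+o(1))$, but here we actually need the half-integer power at the constant term level, reflecting that $2^N [x^N]S(x)$ has leading order $N^{-1/2}$ rather than $N^{-3/2}$ — so I would need to confirm the exact shape of the local expansion (it should be $S(x) \sim B\sqrt{1-2x}\,(\text{constant}) + \cdots$ with no polynomial constant term, or more precisely have the appropriate ``analytic part cancels'' phenomenon that makes the $N^{-1/2}$ term survive). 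In practice \texttt{gdev} handles all of this automatically; the content of the proof is just that the pipeline terminates and that the constant extracted matches ${{\sqrt{6}}/(4\sqrt{\pi})} \doteq 0.345494149$. The $O(N^{-3/2})$ bound on the error then comes from the next term of the Puiseux expansion at $x=1/2$, provided we also verify there are no other singularities of equal modulus — which follows from the fact that the underlying language contains a ``sufficiently rich'' set of words, making $1/2$ the unique singularity on the circle $|x|=1/2$.
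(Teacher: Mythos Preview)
Your proposal is correct and follows exactly the paper's own approach: take the unambiguous PDA $M'_3$, apply the triple construction to get the displayed $12$-variable grammar, convert to a polynomial system, eliminate via a Gr\"obner basis, and hand the resulting algebraic equation to \texttt{gdev} for the asymptotics. One small correction to your singularity discussion: to obtain a main term of order $2^N N^{-1/2}$ the dominant singularity must be of type $(1-2x)^{-1/2}$, not $(1-2x)^{1/2}$ (the latter yields $2^N N^{-3/2}$); but as you note, \texttt{gdev} extracts the correct Puiseux exponent automatically, so this does not affect the validity of the pipeline.
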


\begin{theorem}
The number of $5$-equal numbers in the interval
$[2^{N-1}, 2^N)$ is
\begin{equation}
2^N \left(cN^{-1/2}  + O(N^{-3/2}) \right),
\label{sturd5}
\end{equation}
where $c = {{\sqrt{5}} \over {4\sqrt{\pi}}} \doteq 0.315391565$.
\label{5equal}
\end{theorem}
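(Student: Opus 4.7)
The plan is to follow the exact pipeline used to prove Theorem~\ref{clok} and Theorem~\ref{3equal}, applied now to the unambiguous PDA $M'_5$ for the $5$-equal numbers displayed in Figure~\ref{fig10}. The PDA $M'_5$ differs from the $5$-flimsy PDA $M_5$ only in how transitions into the {\tt END} state are defined, so its unambiguity follows by the same argument already used for $M_5$: the stack height and the state (which records the pending carry in the multiplication by $5$ together with the sign of $s_2(n)-s_2(5n)$) are deterministic functions of the reversed-binary prefix read so far, and the modified acceptance condition simply selects a different subset of computations as accepting rather than creating multiple accepting paths for a single input.

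First I would apply the triple construction \cite[pp.~115--119]{Hopcroft&Ullman:1979} to $M'_5$, obtaining an equivalent CFG whose unambiguity is preserved by \cite[Thm.~5.4.3]{Harrison:1978}. I would then clean this grammar as in Section~\ref{flim}, removing useless symbols and unit productions, which yields the $35$-variable grammar with start symbol $S$ displayed just above the theorem statement. Substituting the indeterminate $x$ for every terminal, $+$ for each $\orr$, and $=$ for each $\rightarrow$ converts the grammar into a polynomial system whose unique power-series solution $S(x)$ vanishing at the origin satisfies $[x^N]S(x) = |\{n\in[2^{N-1},2^N) : s_2(5n)=s_2(n)\}|$, by the Chomsky--Sch\"utzenberger correspondence for unambiguous grammars \cite{Chomsky&Schutzenberger:1963,Kuich&Salomaa:1986,Panholzer:2005}.

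Next I would compute a Gr\"obner basis of this polynomial system in {\tt Maple} under a term order that eliminates every auxiliary variable, producing a single algebraic equation $P(x,S(x))=0$ from which $S$ is pinned down as the power-series root analytic at $x=0$. Finally I would pass $P$ to Bruno Salvy's {\tt gdev} package in {\tt algolib} \cite{Salvy:2013}, which performs the Flajolet--Sedgewick singularity analysis \cite[\S VII]{Flajolet&Sedgewick:2009} automatically. The expected behavior is a dominant singularity of $S(x)$ at $x=1/2$ of square-root type, producing the announced leading order $2^N c N^{-1/2}$, with $c=\sqrt{5}/(4\sqrt{\pi})$ emerging from the Puiseux expansion of $P$ near that singularity and the $O(N^{-3/2})$ error term generated by the next term of {\tt gdev}'s output.

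The principal obstacle is computational rather than conceptual: the $5$-equal system has $35$ equations, substantially larger than the $15$-equation system for $3$-equal numbers, so eliminating the $34$ auxiliary variables by Gr\"obner basis techniques may require a carefully staged elimination or a lexicographic order chosen to exploit the block structure of the grammar in order to terminate with a minimal polynomial of manageable degree. A secondary point worth verifying is that the constant produced numerically by {\tt gdev} does simplify symbolically to $\sqrt{5}/(4\sqrt{\pi})$; this amounts to checking that the branch of $P(1/2+u, S)=0$ relevant at the dominant singularity has the expected leading coefficient, and can be read off the Newton polygon of $P$ at $(1/2, S(1/2))$.
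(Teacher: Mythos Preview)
Your proposal is correct and follows essentially the same approach as the paper: the paper's proof is simply the sentence ``converting these grammars into systems of equations, solving, and finding the asymptotics gives us the following results,'' i.e., running the Section~\ref{flim} pipeline on the displayed $35$-variable grammar for $M'_5$. Your remarks about staged elimination for the Gr\"obner computation and verifying the constant symbolically are reasonable practical notes, but the underlying method is identical.
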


\section{Conclusions and open problems}
We have shown that techniques from automata theory can be used to solve problems in number theory.   For other fun along these lines,
see \cite{Bell&Hare&Shallit:2019,Rajasekaran&Shallit&Smith:2019}.

It would be interesting to understand the distribution of values of $\msw(n)$ and
$\mfw(n)$ for $n$ flimsy.  We leave this as an open problem.

\section{Acknowledgments}

We thank Kenneth Stolarsky for helpful comments.

\appendix

\section{Appendix:  the dynamic programming algorithm}
\label{appa}

In the pseudocode that follows, the scope of loops is indicated by the indentation.
\begin{verbatim}
minrep(n)   { assumes n odd and at least 3 }

sumd := sumdig(2,n);   {sum of base-2 digits of n}

{make a table of powers of 2}
b := 1;
a := 0;
repeat
   b := (2*b) mod n;
   a := a+1;
until
   b = 1;
ord2 := a;   { the order of 2 mod n }
power2 := array[0..ord2-1] of integer;
for m := 0 to ord2-1 do
   power2[m] := b;
   b := (2*b) mod n;

{ the intent is that x[i,j,r] = true, if j (mod n) has a representation 
as a sum of exactly i powers of 2, using only the first r elements of 
power2 (without repetition), and false otherwise.  
y[i,j,r] = smallest integer congruent to j (mod n)
representable by the sum of exactly i powers of 2,
using only the first r elements of power2 (without repetition) }

x := array[0..sumd-1, 0..n-1, 0..ord2] of boolean;
y := array[0..sumd-1, 0..n-1, 0..ord2] of integer;

{ initialize }

for r := 0 to ord2 do
     for i := 1 to sumd-1 do
         for j := 0 to n-1 do
            x[i,j,r] := false;
            y[i,j,r] := infinity;
    x[0,0,r] := true;
    y[0,0,r] := 0;
    
{ fill in table }

for r := 1 to ord2 do   {consider summand 2^{r-1} mod p}
   for j := 0 to n-1 do      { check position j }
       for i := 1 to sumd-1 do     {fill in level i of the array}
            x[i,j,r] := x[i,j,r-1];
            y[i,j,r] := y[i,j,r-1];
            {check if we can use 2^{r-1} }
            if x[i-1, (j-power2[r-1]) mod n, r-1] then
                x[i,j,r] := true;
                y[i,j,r] := min(y[i,j,r], 
                y[i-1, (j-power2[r-1]) mod n, r-1] + 2^{r-1});

sturdy := true;
for i := 2 to sumd-1 do
    sturdy := sturdy and x[i,0,ord2];

if (sturdy) then
    print("swm(n) = ",sumd);
    print("msw(n) = ",1);
else
    i := 1;
    while (not x[i,0,ord2]) do i := i+1;
    print("swm(n) = ",i);
    print("msw(n) = ",y[i,0,ord2]/n);
    mfw := infinity;
    while (i < sumd) do 
        mfw := min(mfw, y[i,0,ord2]);
        i := i+1;
    print("mfw(n) = ",mfw);

end;
\end{verbatim}

\section{Appendix:  Maple code}
\label{mapleapp}
To use this code, you will first need to download the {\tt algolib} package from \url{http://algo.inria.fr/libraries/}.

\begin{verbatim}
eqs := [-S + x*V_F + x*S,
-V_A + x*V_E + x*V_A,
-V_B + x*V_G + x*V_B,
-V_C + x*V_H + x + x*V_C,
-V_D + x*V_I + x*V_D,
-V_E + x + x*V_A*V_J,
-V_F + x*V_N + x*V_A*V_K,
-V_G + x*V_L*V_B + x,
-V_H + x*V_M + x*V_L*V_C + x,
-V_I + x*V_M + x*V_L*V_D + x + x*S,
-V_J + x*V_J + x*V_E,
-V_K + x*V_K + x*V_F,
-V_L + x*V_L + x*V_G,
-V_M + x*V_M + x + x*V_H,
-V_N + x*V_N + x*V_I]:
Groebner[Basis](eqs, lexdeg([V_A, V_B, V_C, V_D, V_E, V_F, V_G, V_H, 
V_I, V_J, V_K, V_L, V_M, V_N], [S]));
algeq := %[1]:
map(series, [solve(algeq, S)], x);
f := solve(algeq,S);
ps := f[1]:
assume(x, positive):
series(ps, x, 40);
libname:="<insert current directory path>",libname:
combine(equivalent(ps,x,n,5));
\end{verbatim}

\section{Table of values}

Here the column labeled ``char''
is F if the number is flimsy and
S if it is sturdy.

\begin{table}
\centering
\resizebox{\columnwidth}{!}{%
\begin{tabular}{ccccc||ccccc}
$n$ & char & $\swm(n)$ & $\msw(n)$ & $\mfw(n)$ & 
$n$ & char & $\swm(n)$ & $\msw(n)$ & $\mfw(n)$ \\
\hline
3 & S & 2 & 1 & -  & 5 & S & 2 & 1 & -  \\ 
7 & S & 3 & 1 & -  & 9 & S & 2 & 1 & -  \\ 
11 & F & 2 & 3 & 3 & 13 & F & 2 & 5 & 5 \\ 
15 & S & 4 & 1 & -  & 17 & S & 2 & 1 & -  \\ 
19 & F & 2 & 27 & 27 & 21 & S & 3 & 1 & -  \\ 
23 & F & 3 & 3 & 3 & 25 & F & 2 & 41 & 41 \\ 
27 & F & 2 & 19 & 3 & 29 & F & 2 & 565 & 5 \\ 
31 & S & 5 & 1 & -  & 33 & S & 2 & 1 & -  \\ 
35 & S & 3 & 1 & -  & 37 & F & 2 & 7085 & 7085 \\ 
39 & F & 3 & 7 & 7 & 41 & F & 2 & 25 & 25 \\ 
43 & F & 2 & 3 & 3 & 45 & S & 4 & 1 & -  \\ 
47 & F & 3 & 11 & 3 & 49 & S & 3 & 1 & -  \\ 
51 & S & 4 & 1 & -  & 53 & F & 2 & 1266205 & 5 \\ 
55 & F & 3 & 7 & 3 & 57 & F & 2 & 9 & 9 \\ 
59 & F & 2 & 9099507 & 3 & 61 & F & 2 & 17602325 & 5 \\ 
63 & S & 6 & 1 & -  & 65 & S & 2 & 1 & -  \\ 
67 & F & 2 & 128207979 & 128207979 & 69 & S & 3 & 1 & -  \\ 
71 & F & 3 & 119 & 119 & 73 & S & 3 & 1 & -  \\ 
75 & S & 4 & 1 & -  & 77 & F & 3 & 5 & 5 \\ 
79 & F & 3 & 13 & 7 & 81 & F & 2 & 1657009 & 1657009 \\ 
83 & F & 2 & 26494256091 & 395 & 85 & S & 4 & 1 & -  \\ 
87 & F & 3 & 3 & 3 & 89 & S & 4 & 1 & -  \\ 
91 & F & 3 & 3 & 3 & 93 & S & 5 & 1 & -  \\ 
95 & F & 3 & 5519 & 3 & 97 & F & 2 & 172961 & 172961 \\ 
99 & F & 2 & 331 & 11 & 101 & F & 2 & 11147523830125 & 365 \\ 
103 & F & 3 & 5 & 5 & 105 & S & 4 & 1 & -  \\ 
107 & F & 2 & 84179432287299 & 3 & 109 & F & 2 & 2405 & 5 \\ 
111 & F & 3 & 591 & 3 & 113 & F & 2 & 145 & 145 \\ 
115 & F & 3 & 571 & 9 & 117 & F & 4 & 5 & 5 \\ 
119 & F & 3 & 71 & 3 & 121 & F & 2 & 297758653049289 & 9 \\ 
123 & F & 4 & 19 & 3 & 125 & F & 2 & 9007199254741 & 5 \\ 
127 & S & 7 & 1 & -  & 129 & S & 2 & 1 & -  \\ 
\end{tabular}
}
\end{table}

\end{document}